\newtheorem{theorem}{Theorem}
\newtheorem{definition}{Definition}
\definecolor{lightblue}{rgb}{0.5, 0.7, 0.90}
\definecolor{lavender}{rgb}{0.6, 0.5, 0.9}
\definecolor{softgreen}{rgb}{0.6, 0.8, 0.6}
\definecolor{deepskyblue}{rgb}{0.0,0.749,1.0}
\begin{document}

\title{
  Dual-Mind World Models: A General Framework for Learning in Dynamic Wireless Networks 
}

\author{Lingyi Wang,~\IEEEmembership{Graduate Student Member,~IEEE,}
Rashed Shelim,~\IEEEmembership{Member,~IEEE,}\newline
Walid Saad,~\IEEEmembership{Fellow,~IEEE,} and 
Naren Ramakrishnan,~\IEEEmembership{Fellow,~IEEE}

\thanks{A preliminary version of this work was accepted by the IEEE Global Communications
Conference~\cite{wang2025world}, 2025.}
\thanks{This research was supported by the U.S. National Science Foundation under Grant CNS-2225511.}
\thanks{Lingyi Wang, Rashed Shelim and Walid Saad are with the Bradley Department of Electrical and Computer Engineering, Virginia Tech, Alexandria, VA, 22305, USA.
(e-mail: \{lingyiwang, rasheds, walids\}@vt.edu).}
\thanks{
Naren Ramakrishnan is with the Department of Computer Science, Virginia Tech, Alexandria, VA, 22305, USA.
(e-mail: naren@vt.edu).}
}

\maketitle

\begin{abstract}
Despite the popularity of reinforcement learning (RL) in wireless networks, existing approaches that rely on model-free RL (MFRL) and model-based RL (MBRL) are data inefficient and short-sighted. Such RL-based solutions cannot generalize to novel network states since they capture only statistical patterns rather than the underlying physics and logic from wireless data. These limitations become particularly challenging in complex wireless networks with high dynamics and long-term planning requirements. To address these limitations, in this paper, a novel dual-mind world model-based learning framework is proposed with the goal of optimizing completeness-weighted age of information (CAoI) in a challenging mobile, millimeter wave (mmWave) vehicle-to-everything (V2X) scenario. Inspired by cognitive psychology, the proposed dual-mind world model encompasses a pattern-driven System~1 component and a logic-driven System~2 component to learn dynamics and logic of the wireless network, and to provide long-term link scheduling over reliable imagined trajectories. In particular, link scheduling is learned through end-to-end differentiable imagined trajectories with logical consistency over an extended horizon rather than relying on wireless data obtained from environment interactions. Moreover, through imagination rollouts, the proposed world model can jointly reason time-varying network states and plan link scheduling. Thus, during intervals without actual, real-time observations, the dual-mind world model remains capable of making efficient decisions. 
Extensive experiments are conducted on a realistic simulator based on Sionna with end-to-end physical channel, ray-tracing, and scene objects with material properties.
Simulation results show that the proposed world model achieves a significant improvement in data efficiency, and achieves 22\%, 32\%, and 16\% improvement in terms of CAoI, respectively, compared to the state-of-the-art MFRL baseline, MBRL baseline, and the world model approach with only System~1. Moreover, the proposed dual-mind world model achieves strong generalization and adaptation to unseen scenarios and network conditions.
\end{abstract}
\begin{IEEEkeywords}
World model, learning-based optimization, long-term planning, cognitive psychology, wireless networks. 
\end{IEEEkeywords}

\IEEEpeerreviewmaketitle

\vspace{-0.3cm}
\section{Introduction}
Many fundamental wireless networking problems, such as resource management or network control, can be posed as optimization problems~\cite{10636212,shi2023machine, ejaz2025comprehensive,shelim2024fast}. As such, the use of advanced optimization techniques~\cite{10571789,10976568,10363435,10546306,10878807}, ranging from convex optimization to stochastic optimization and dynamic programming, has been instrumental in the evolution of wireless networks towards today's fifth-generation (5G) cellular system and the upcoming sixth-generation (6G) wireless cellular system. However, the limitations of such techniques started to become apparent since 5G. . Particularly, such approaches tend to depend on highly accurate mathematical models of the network, which are difficult to obtain in practice due to stochastic channel variations, user mobility, and incomplete system information~\cite{10891357,10820863,8954939}. Moreover, they cannot satisfy the real-time requirements for complex, non-convex problems. Although heuristic algorithms exist for non-convex problems, such methods often lack scalability and robustness in large, dynamic wireless systems \cite{zhou2023survey}. To alleviate these challenges, there has been a recent surge of works~\cite{chen2024adaptive,10744542,10398469,10943268,wang2023adaptive,9612729} that relied on reinforcement learning (RL) approaches, including both model-based RL (MBRL)~\cite{chen2024adaptive,10744542,10398469} and model-free RL (MFRL)~\cite{wang2023adaptive,10943268,9612729,10077432}, that can learn directly from 
wireless data and adapt to dynamic environments without predefined models.  However, despite the potential advantages of RL compared to traditional optimization approaches, the prior art on RL-based learning approaches~\cite{chen2024adaptive,10744542,10398469,10943268,wang2023adaptive,9612729,10077432} is limited by three significant and fundamental challenges: 
\vspace{-0.2cm}
\begin{enumerate}
  \item \textit{Data inefficiency}: Both MFRL and MBRL face major data inefficiency challenges. For instance, because of their reliance on expensive trial-and-error interactions with the environment, MFRL approaches \cite{wang2023adaptive,10943268,9612729,10077432} cannot efficiently explore a large-scale wireless state-action space and learn an optimal policy in highly dynamic networks without significant environment interactions. Meanwhile, MBRL approaches like those in~\cite{janner2019trust,park2023model,9852968} cannot learn reliable dynamics for wireless networks because the input wireless data, such as high-dimensional channel information, ray-tracing features and interference statistics, is sparse and noisy with uncertainty. Hence, it is difficult to learn an accurate wireless model with limited wireless data.
  \item \textit{Lack of long-term planning abilities}: In a highly dynamic wireless network, there are two main sources for dynamics: (a) uncontrollable exogenous dynamics, such as users' mobility or time-varying channel, and (b) policy-induced endogenous dynamics, such as resource consumption or user state updating. Moreover, in this context, optimality in a single step or within the short term usually cannot ensure global, system objective over long horizons. Hence, there is a need for new optimization and learning approaches that can explicitly model the spatio-temporal causality and logic-driven dependencies of wireless networks, thus supporting physically consistent prediction and long-horizon planning under highly dynamic wireless conditions. However, MFRL methods are inherently short-sighted, as their value estimates rely on immediate rewards and cannot capture long-term dependencies. In contrast, MBRL methods are limited by accumulated model errors over predictions that impair the reliability of long-horizon planning. Moreover, both MFRL and MBRL approaches typically rely on non-differentiable, sampling-based policy learning~\cite{chen2024adaptive,10744542}, thus, they are unable to address the classical credit assignment problem of RL approaches~\cite{moriarty1999evolutionary}, i.e., how to attribute delayed global rewards back to earlier local actions.
  \item \textit{Limited generalization}: Learning a wireless network environment requires machine learning techniques that have strong generalization abilities. Here, generalization refers to the ability of a learning-based approach to transfer the knowledge of underlying wireless physics and dynamics, such as channel variations, blockage patterns, and mobility behaviors, beyond the training data~\cite{shi2023machine}. In other words, a generalizable model can maintain high prediction accuracy and robust control when faced with a stochastic, time-varying wireless environment beyond its original training data. In this context, existing RL approaches mainly rely on statistical pattern recognition of wireless data, but they do not learn the physical propagation characteristics, e.g., blockage, mobility, and channel dynamics, and the causal interaction rules, e.g., scheduling constraints and resource dependencies. Hence, the policies learned by RL approaches often lack robustness and generalization to unseen environments.
\end{enumerate}

\vspace{-0.6cm}
\subsection{Contributions}
\vspace{-0.1cm}
The main contribution of this paper is a novel, universal framework for learning-based wireless network optimization \cite{10929033,chai2025mobiworld,zhao2025world}, grounded in the fundamental framework of \emph{world models}~\cite{hafner2023mastering,sv2023gradient,wang2025dmwm}. In particular, inspired by cognitive psychology, 
we propose a dual-mind world model framework that can capture dynamics and uncertainty of wireless networks, and learn long-term policies in differentiable imagined trajectories with logical consistency over extended horizons. Indeed, this is enabled by the fact that the proposed framework can integrate both fast (so-called System~1) and slow (so-called System~2) thinking abilities~\cite{kahneman2011thinking}. The proposed framework allows a wireless system to learn the underlying physics and logical rules (e.g., logic of link availability and effects of resource scheduling) of its environmental dynamics (e.g., vehicle mobility and frequent link blockages), thus significantly improving data efficiency and providing more reliable imagination over an extended horizon for policy learning. While the proposed framework can apply to a broad range of wireless network problems, we consider a challenging representative scenario pertaining to a millimeter-wave (mmWave) vehicle-to-everything (V2X) communication network and formulate a packet-completeness-aware age of information (CAoI) minimization problem by link scheduling. Particularly, this problem involves both the exogenous dynamics including the vehicles' mobility pattern and real-world channel changes, and the endogenous dynamics of CAoI driven by link scheduling. In summary, our key contributions include:
\vspace{-0.1cm}
\begin{itemize}
  \item We propose a novel world model-based learning framework for wireless networks based on recurrent state-space model (RSSM). Compared to existing RL approaches, RSSM can effectively model the uncertainty and dynamics of the network, significantly enhance data efficiency, i.e., achieve superior task performance within less environment interactions, and endow the wireless network with the long-term planning ability. These improvements are due to the evolution that the policy can be learned in differentiable, end-to-end imagined trajectories from the dynamics model over an extended horizon instead of a short-sighted, expensive trial-and-error mechanism by repetitive environment interactions. 
  \item We further propose a novel dual-mind world model framework tailored to wireless networks,  composed of an intuitive, pattern-driven System~1 component based on RSSM and a logic-driven System~2 component based on logic-integrated neural network (LINN). To overcome the limitations of purely data pattern-driven RSSM, LINN can captures causal and rule-based dependencies in network state transitions, such as how mobility, blockage, and scheduling jointly affect link availability and long-horizon CAoI, thus ensuring logic-consistent imagination of networks' future states and reliable long-term planning. We derive a logic-enhanced evidence lower bound (LE-ELBO) that unifies statistical imagination from System 1 with logical consistency feedback from System 2 to ensure physically consistent predictions.
  \item We develop a realistic simulator based on Sionna and Blender for three-dimensional (3D) dynamic scenario creation and real-world physical channel modeling. The realistic simulator simulates end-to-end channel physics, ray-tracing, and scene objects with material properties. 
  \item Extensive simulation results show that the proposed dual-mind world model achieves a significant improvement in data efficiency, and achieves 22\%, 32\%, and 16\% improvement in terms of CAoI, respectively, compared to the state-of-the-art MFRL, MBRL, and the world model with only System~1. Moreover, the results show that the proposed framework achieves superior generalization and adaptivity to unseen scenarios and network structures.
\end{itemize}
Collectively, these contributions help us create a new framework for wireless network optimization that can more accurately model complex network, integrate fast and slow ``dual-mind'' reasoning to learn data-efficient, long-horizon policies, and generalize robustly across diverse real-world scenarios.


\vspace{-0.4cm}
\section{Related Works} 
Prior works \cite{chen2024adaptive,10744542,10398469,10943268,wang2023adaptive,10077432,9612729} have widely applied RL approaches in mobile wireless networks and age-of-information (AoI) minimization. 
The works in \cite{10398469} and \cite{9612729} considered raw observation information, such as vehicle mobility and channel state information, directly as state input into RL approaches with real-time optimization performance. However, it is challenging for RL approaches to obtain predictive information and learn network physics from the raw data \cite{wang2025dmwm}, and these approaches cannot support long-term planning. 
In \cite{10943268}, the authors addressed a spatial-temporal AoI optimization problem by using a Lyapunov-based decomposition that is coupled with RL. This approach simplifies the optimization but still focuses on short-term decisions, as it cannot capture the long-term dependencies of AoI evolution or estimate future returns from a global perspective.
As previously mentioned, all of the prior works on RL-based wireless network design~\cite{chen2024adaptive,10744542,10398469,10943268,wang2023adaptive,9612729,10077432} are limited in terms of data efficiency, long-term planning, and generalization.
To overcome these challenges faced by RL methods, recent works~\cite{hafner2023mastering,sv2023gradient,wang2025dmwm} in the machine learning community proposed \emph{world model}-based learning frameworks, that could provide a more promising and efficient solution for cognition, prediction and planning. 
Particularly, world models learn and predict the dynamics of the environment along with uncertainty in a latent representation space, which decouples the environment cognition from action planner. Through the imagination ability of the learned environment predictive model,  the planner can be trained by estimating long-term impacts of the current policy in end-to-end differentiable imagined trajectories. In this way, policy learning is independent with actual environment interactions, and future rewards can be accurately attributed to earlier decisions, thereby learning long-term planning abilities~\cite{wang2025world}. World models have been widely used in learning policy from visual data and have shown significant improvement in a broad range of control tasks, ranging from robotic manipulation tasks~\cite{lancaster2024modem} to autonomous navigation and self-driving vehicles~\cite{wang2024driving}. 
However, the existing world models in ~\cite{hafner2023mastering,sv2023gradient,wang2025dmwm}, \cite{lancaster2024modem}, and \cite{wang2024driving} cannot be directly used in wireless networks. For instance, wireless environment observations, such as channel state information and antenna angles, are high-dimensional and sparse, thus it is challenging for existing world model approaches to explore complex spatio-temporal structures from wireless data. Moreover, wireless data is characterized by physical features such as multipath propagation, blockages, and mobility. Hence, a world model pertaining to wireless networks must capture not only the stochastic evolution of wireless states but also the underlying physics and logical structure of communication systems to enable reliable network prediction over extended horizons. Here, we note that, in~\cite{wang2025dmwm}, we proposed cognitive psychology theory-inspired world models for robotic control tasks in environments with smooth and structured dynamics. However, wireless networks exhibit highly spatio-temporally coupled features, where link reliability depends jointly on vehicle mobility, dynamic blockage, and scheduling. Moreover, the wireless network dynamics involve both stochastic in channel variations, and logic in scheduling constraints and physical relationships. These characteristics fundamentally differ from robotic environments and prevent a direct application of the results in~\cite{wang2025dmwm}, thus motivating a specialized world-model design tailored to wireless network optimization.

\begin{figure}
  \centering
  \includegraphics[scale=0.52]{./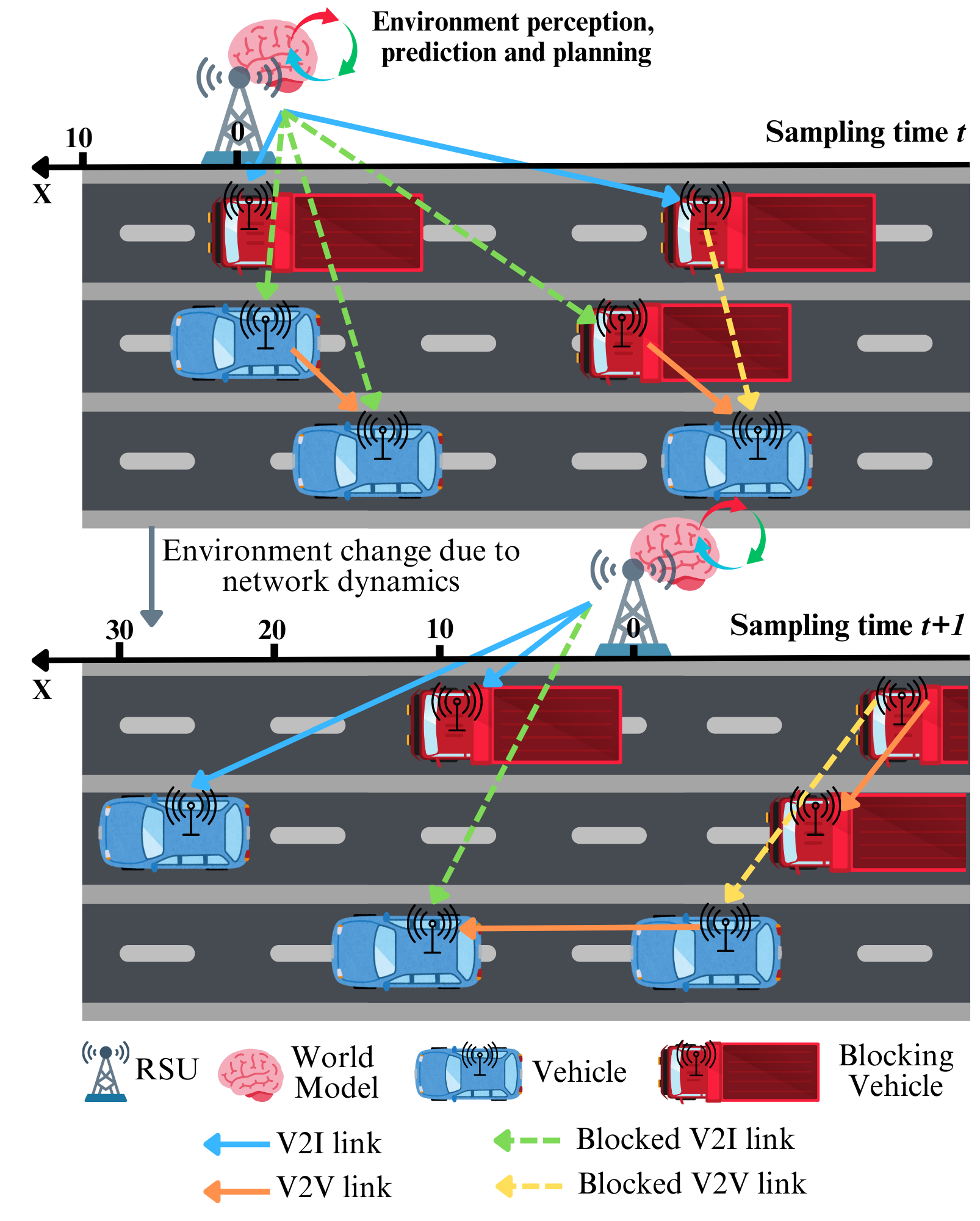}
    \vspace{-0.2cm}
  \caption{Illustration of the use of a world model for learning and optimization in a mmWave V2X communication network.}
  \vspace{-0.6cm}
\end{figure}

\vspace{-0.6cm}
\section{System Model}
\vspace{-0.2cm}
While the world model framework that we will develop in Section IV can apply to a broad range of wireless use cases, to concretely showcase its benefits, we focus on a representative system model, as shown in Fig. 1. We consider a mmWave V2X network consisting of a roadside unit (RSU) $u$ and a set $\mathcal{V}$ of $V$ mobile vehicles, with both vehicle-to-infrastructure (V2I) and vehicle-to-vehicle~(V2V) links. Let $\mathcal{M}_{t}$ and $\mathcal{Z}_{t}$ be, respectively, the learnable, time-varying sets of V2I and V2V link pairs at timeslot $t$. The V2V/V2I links share a bandwidth $B$. Similar to~\cite{tunc2021mitigating}, we use narrow and directional beams, and, thus, there is no interference in the V2X network.
We consider a time-slotted system in which each timeslot is indexed by \( t \) and has a fixed duration $\xi$. Each vehicle operates in a half-duplex communication mode, where it can establish only one communication link during a timeslot $t$, and is unable to transmit and receive data simultaneously.

\vspace{-0.4cm}
\subsection{Transmission Model}
Let $g^{i}_t$ be the mmWave V2X channel gain at timeslot $t$ from the transmitter to the receiver over link $i$, where $g^{i}_t$ is characterized by high path loss, multipath propagation and dynamic blockages.
The data rate, in packets per timeslot $t$, for V2I link $m \in \mathcal{M}_{t}$ and V2V link $z \in \mathcal{Z}_{t}$ is respectively
\begin{equation}
  \begin{aligned}
  &R^{\mathrm{V2V},z}_{t}=\frac{B \xi}{S}  \log_2\left(1+\frac{P_vg^{z}_t}{N_0 B}\right), \\
  &R^{\mathrm{V2I},m}_{t}=\frac{B \xi}{S}  \log_2\left(1+
  \frac{P_u g^{m}_{t}}{N_0 B}\right),
  \end{aligned}
\end{equation}
where $S$ is the size of each packet, $N_0$ is the power spectral density of additive white Gaussian noise, and $P_u$ and $P_v$ are, respectively, the transmit power of the RSU and each vehicle.

In mmWave V2X communication, blockages caused by high-speed mobile vehicles, buildings, and other obstacles significantly impact signal propagation and can lead to disruptions in both V2V and V2I links. To model the blockage effect, we consider the Fresnel zone obstruction~\cite{9838711}, path loss variations, and environmental dynamic characteristics in our blockage model. 
The first Fresnel zone radius determines the critical region for obstruction as
$
\Gamma_\mathrm{F}= \sqrt{\frac{\lambda \delta_{kb} \delta_{bv}}{\delta_{kv}}},
$
where $\delta_{kb}$ and $\delta_{bv}$ are, respectively, the distances from the blocking vehicle to the transmitter and receiver, with $\delta_{kv} = \delta_{kb} + \delta_{bv}$ being the total link distance. A blockage occurs when the height of the blocking vehicle $\Upsilon_b$ exceeds the effective Fresnel height $h_\textrm{F}$, which is given by
$
\Upsilon_\mathrm{F}= \Upsilon_k + \frac{(\Upsilon_v - \Upsilon_k) \delta_{kb}}{\delta_{kv}} - 0.6 \Gamma_\mathrm{F},
$
where $\Upsilon_k$ and $\Upsilon_v$ respectively represent the antennas heights of the transmitter and receiver.
Assume the vehicle heights follow a Gaussian distribution $\Upsilon_b \sim \mathcal{N}(\mu_b, \sigma_b^2)$, the probability of blockage will be
$
\mathrm{Pr}^{\mathrm{block}}_\mathrm{F} = Q\!\left(\tfrac{\Upsilon_\mathrm{F}-\Upsilon_b}{\sigma_b}\right),
$
where $Q(x) = \tfrac{1}{\sqrt{2\pi}} \int_x^{\infty} e^{-t^2/2}\,\mathrm{d}t$ is the Gaussian Q-function. For multiple blocking vehicles, the number of vehicles is assumed to follow a Poisson point process with vehicle density $\lambda_v$ \cite{9838711}, and the line-of-sight (LoS) probability of V2V and V2I links will be, respectively, given by $\mathrm{Pr}_{\text{LoS}}^{{\text{V2V}}} = e^{-\lambda_v \delta_{kv} \mathrm{Pr}^{\mathrm{Block}}_\mathrm{F}}$ and $\mathrm{Pr}_{\text{LoS}}^{{\text{V2I}}} = P_{\text{LoS}}^{\text{3GPP}} \mathrm{Pr}_{\text{LoS}}^{{\text{V2V}}}$,
where $P_{\text{LoS}}^{\text{3GPP}} = e^{-\varepsilon \delta_{kv}}$ is the 3GPP empirical model \cite{giordani2019path} that captures urban blockages from buildings, and $\varepsilon$ is a factor that depends on the environment.

\begin{figure*}[ht!]
  \begin{equation}\label{eq2}
    \begin{aligned}
    &\tilde{G}^{v,m}_t = \mathbb{I}(\lfloor R^{\mathrm{V2I},m}_{t} \rfloor \geq C^u) G^u_{t} + \mathbb{I}(\lfloor R^{\mathrm{V2I},m}_{t}\rfloor < C^u)\left[\frac{\lfloor R^{\mathrm{V2I},m}_{t} \rfloor}{C^u}G^u_{t} + \frac{C^u - \lfloor R^{\mathrm{V2I},m}_{t} \rfloor}{C^u}A^v_t\right],\\
    &\tilde{G}^{v,z}_t = \mathbb{I}(\lfloor R^{\mathrm{V2V},z}_{t} \rfloor \geq C^{v^{\prime}}_{t}) \left[\frac{C^{v^{\prime}}_t}{C^u}G^{v^{\prime}}_t + \frac{C^u - C^{v^{\prime}}_{t}}{C^u}A^v_t\right] \! + \! \mathbb{I}(\lfloor R^{\mathrm{V2V},z}_{t} \rfloor < C^{v^{\prime}}_t)\! \left[\frac{\lfloor R^{\mathrm{V2V},z}_{t} \rfloor}{C^u}G^{v^{\prime}}_t + \frac{C^u - \lfloor R^{\mathrm{V2I},m}_{t} \rfloor}{C^u}A^v_t\right].\\
  \end{aligned}
  \tag{3}
  \end{equation}
  \hrulefill
  \vspace{-0.5cm}
\end{figure*}

\vspace{-0.5cm}
\subsection{CAoI Metric}
The RSU transmits road information that consists of $C^u$ packets in each timeslot. This information includes real-time data such as traffic signal timing, roadside sensor messages, and emergency warnings. We consider a practical broadcast scenario in which the RSU distributes up-to-date data to vehicles for both driving efficiency and safety. This scenario is usually evaluated by AoI to quantify end-to-end latency. However, the classical AoI metric overlooks two key aspects of highly dynamic networks: (a) link reliability, since it does not account for packet loss or partial transmissions caused by blockage and mobility, and (b) temporal scheduling dependence, as AoI considers each update independently and cannot capture how current scheduling influence future information freshness. In particular, when blockages or severe path loss occur in mmWave networks, packets can be truncated and partially received. Hence, in the next definition, we introduce the concept of CAoI by adding packet completeness that scales the AoI by each link's transmission rate to more accurately capture the freshness of successfully delivered information. 

\begin{definition}
The \emph{CAoI} $A^v_{t+1}$ of a vehicle $v\in \mathcal{V}$ at timeslot $t+1$ in the V2X communication network is given by
\begin{equation}
  \begin{aligned}
    A^v_{t+1} = 
\begin{cases} 
  t -  \tilde{G}^{v,m}_t + 1, &v \text{ receives from V2I pair } m,  \\ 
  t - \tilde{G}^{v,z}_t + 1, & v \text{ receives from V2V pair } z,  \\ 
A^v_t + 1,& \text{otherwise}.
\end{cases}
  \end{aligned}
  \tag{2}
\end{equation} 
\end{definition}

\noindent The updated CAoI of vehicle $v$ over V2I link $m$ or V2V link $z$ are represented by $\tilde{G}^{v,m}_{t}$ and $\tilde{G}^{v,z}_t$, respectively, as given by (\ref{eq2}), where $C^{v^{\prime}}$ is the number of expected fresher packets from vehicle $v^{\prime}$, and
$G^{u}_t$ and $G^{v^{\prime}}_t$ are, respectively, the CAoI of the RSU and vehicle $v^{\prime}$.
The indicator function \(\mathbb{I}(x)\) is a binary-valued function that equals to 1 if the condition \( x \) holds true and 0 otherwise.
\addtocounter{equation}{2}

\subsection{CAoI Minimization Problem}
The objective of the network is to minimize its average CAoI by optimization link scheduling over a time period $T$ for packet update, which can be posed as an optimization problem:
\vspace{-0.4cm}
\begin{subequations}\label{P:CAoI}
  \begin{align}
    &\min_{\left\{\mathcal{M}_{t}, \mathcal{Z}_{t}\right\}} \frac{1}{T} \sum_{t=1}^{T} \sum_{v=1}^{V} A^v_t \\ 
   \text{ s.t. }  
   & A^v_t \leq \bar{A}, \forall v \in \mathcal{V}, \\
   & \Phi_{\cap}(\mathcal{M}_{t},\mathcal{Z}_{t}) = \emptyset, 
  \end{align}
\end{subequations}
where $\bar{A}$ is the maximum age tolerance for each vehicle, $\Phi_{\cap}(\mathcal{M}_{t},\mathcal{Z}_{t})$ represents the shared link node (transmitter or receiver) set between the V2I link set $\mathcal{M}_{t}$ and the V2V link set $\mathcal{Z}_{t}$.
It is challenging to optimize the link scheduling in (4) due to the coupled spatial mobility of the V2X network and temporal impacts of link scheduling. Particularly, the network must learn an optimal policy in the presence of both exogenous dynamics, i.e., physical location changes and channel changes, and endogenous dynamics, i.e., CAoI update by policy. In other words, the link scheduling needs to jointly recognize the system's inherent mobility pattern and consider its future influences on the system. 
While many existing works have addressed related optimization problems, such as mobility-aware scheduling~\cite{9612729}, V2V communication under dynamics~\cite{10847910}, and AoI optimization over long horizons~\cite{10943268}, they typically rely on single-step decision-making or short-term policy learning, thus they cannot robustly handle the delayed effects of policies on future information freshness.
In contrast, the CAoI objective considered here couples state transitions and action effects over long timescales. It is also sensitive to long-term feedback loops and coupled dynamics that simple policy models fail to resolve. This motivates the need for a more structured and spatial-temporally expressive solution. In particular, a world model framework can both provide reliable predictions over long horizons with logical consistency and directly learn a long-term policy, which is then realized in the next section.

\begin{figure*}[ht!]
  \centering
  \includegraphics[width=0.75\linewidth]{./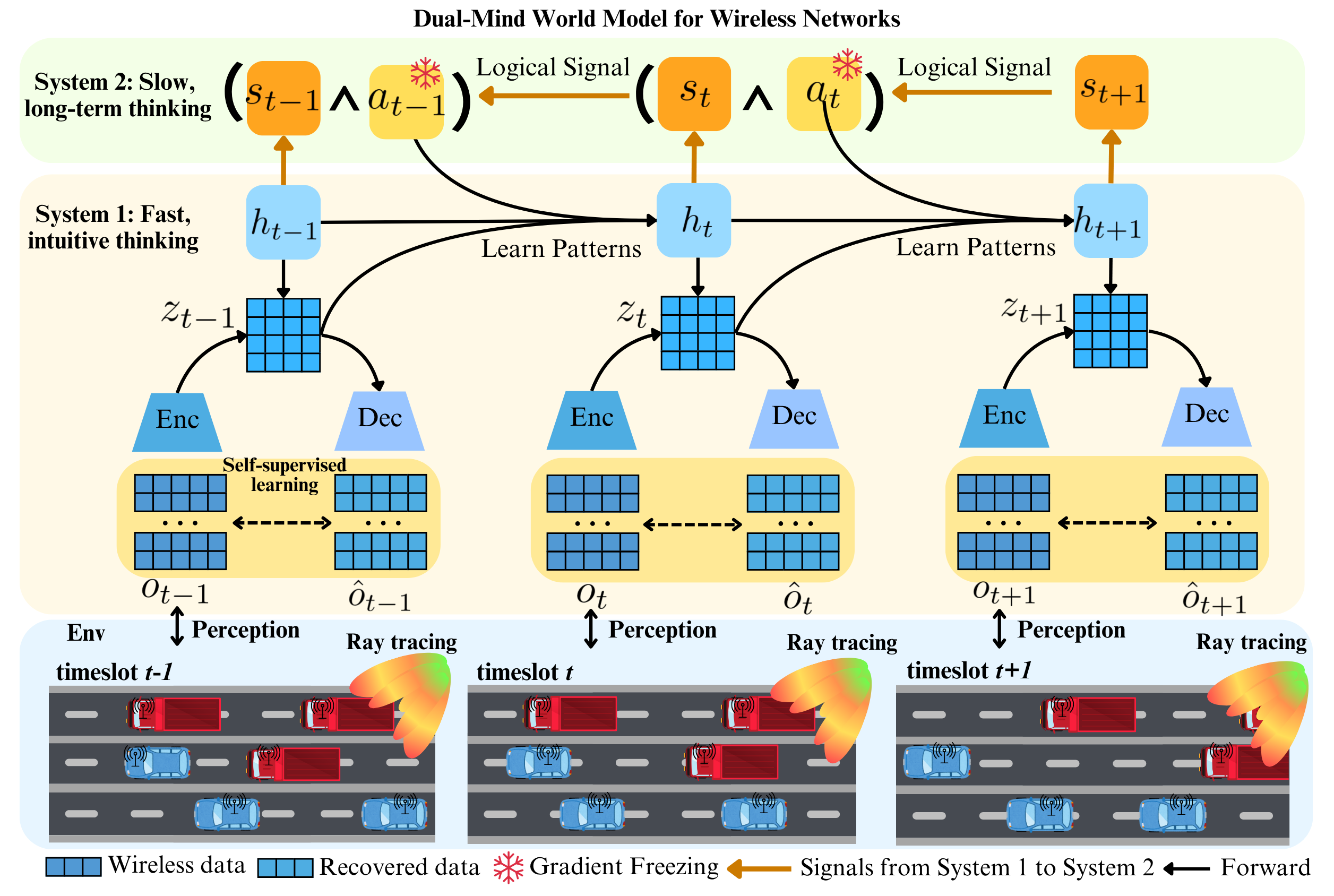}
  \vspace{-0.3cm}
  \caption{Learning a world model for V2X communication networks based on the location data and the ray tracing data.}
  \vspace{-0.6cm}
  \label{Fig:System}
\end{figure*}

\vspace{-0.3cm}
\section{Dual-Mind World Model For Long-Term Prediction and Link Scheduling}
\vspace{-0.1cm}
In this section, we propose a novel dual-mind world model framework for wireless networks, as shown in Fig.~\ref{Fig:System}, which is deployed at the RSU to solve the CAoI minimization problem (\ref{P:CAoI}). Inspired by cognitive psychology, the proposed dual-mind world model consists of a pattern-driven System~1 component for fast inference and a logic-driven System~2 component for capturing logical relationships between network states and actions. 
The advantage of this framework is that it can learn a foresighted planning ability by reliable predictions with long-term logical consistency.
To enable cross-system collaboration, we develop an efficient inter-system signal mechanism. Then, long-term link scheduling is learned through reliable, differentiable imagined trajectories of the wireless network. In the considered scenario, the notion of ``imagined trajectories'' specifically refers to the predictions of CAoI, vehicle locations, and channel states in latent spaces under a given policy. More generally, imagined trajectories can refer to predictive state transitions of the wireless network. Finally, we present a practical use case of joint prediction and link scheduling, that can solve (\ref{P:CAoI}) without real-time wireless data during communication-constrained intervals. Here, we note that this framework will build on and extend our earlier work in \cite{wang2025dmwm}. In particular, the world model developed in \cite{wang2025dmwm} cannot effectively handle the spatio-temporally coupled, hybrid stochastic-logical dynamics of wireless networks, where vehicle mobility, blockage, and link scheduling jointly decide information freshness and link reliability. Hence, we will extend it to a specialized dual-mind world model that integrates network physics and scheduling logic, thus enabling joint learning, prediction and planning for highly dynamic wireless networks.

\vspace{-0.4cm}
\subsection{Pattern-Driven System~1 for Fast Inference}
The System~1 component aims to learn data patterns and environment dynamics from observed wireless data $o_t = \{A_{t},\Xi_{t},L_{t}\}$, that include CAoI $A_{t}=\{A^v_t\}_{v=1}^{V}$, physical channel data $\Xi_{t}$, and vehicle locations $L_{t}$. Particularly, the CAoI of all vehicles enables the network to capture the endogenous dynamics, i.e., dynamic information freshness caused by the current policy, while vehicle locations and physical channel information provide the endogenous dynamics, i.e., spatial geometric relationships among transceivers and physical channel changes caused by the system's inherent pattern. In practice, the location information, CAoI, and channel states can be obtained through periodic status reports from vehicles. Specifically, vehicle positions are available from on-board GPS sensors, while channel data and packet-completeness indicators are fed back to the RSU through control signaling as part of standard V2X protocols. Although such feedback can be unreliable by the occasional loss, delay, or quantization errors in highly mobile environments, these effects can be effectively mitigated by the proposed world model through its joint prediction and planning capability, which enables reliable estimation of missing or outdated information, which will be discussed in Section IV.D..
The planner of the world model decides the link scheduling $a_t=\left\{\mathcal{M}_{t}, \mathcal{Z}_{t}\right\}$ based on the state representations from the System~1 component.

\subsubsection{RSSM-Based Pattern Learning}
For the System~1 component, we use the RSSM framework~\cite{hafner2023mastering}. Particularly, RSSM learns state transitions of the V2X network in a latent space with recurrent structures and variational inference. The RSSM-based System~1 component is used to perform quick, intuitive thinking, and is defined by the following components:
\vspace{-0.1cm}
\begin{align}
    {\raisetag{2.5\normalbaselineskip}
    \begin{array}{ll}
    \text{Deterministic state:}  &h_t = f_\varphi\left(h_{t-1}, z_{t-1}, a_{t-1}\right),\\
    \text{Encoder:}  &z_t \sim q_\varphi\left(z_t \mid h_t, o_t\right), \\
    \text{Stochastic state:}  &\tilde{z}_t \sim p_\varphi\left(\tilde{z}_t \mid h_t\right), \\
    \text{Reward predictor:}  &\tilde{r}_{t} \sim p_\varphi\left(\tilde{r}_{t} \mid h_t, z_t\right), \\
    \text{Decoder:}  &\hat{o}_{t} \sim p_\varphi(\hat{o}_{t} \mid h_t, z_t),
    \end{array}}  
\end{align} 
where $z_t$ is the latent representation for the network observation $o_t$, $o_t$ is the multi-modal representations of $o_t$, $h_t$ is the deterministic state, $\tilde{z}_t$ is the predicted latent representation for the future network state, $\hat{o}_{t}$ is the recovered observations, and $\tilde{r}_{t}$ is the predicted real-world reward at timeslot $t$. For RSSM, we define a loss function 
$\mathcal{L}_{\text{S1}}(\varphi) = \mathcal{L}_{\text {pred }}(\varphi) + \delta_{1} \mathcal{L}_{\text {dyn }}(\varphi) + \delta_{2}  \mathcal{L}_{\text {rep }}(\varphi)$ with the weight factors $\delta_{1}$ and $\delta_{2}$, where:
\vspace{-0.1cm} 
\begin{equation}
  \mathcal{L}_{\text {pred }}(\varphi)  =  -  \ln p_\varphi  \left(\hat{o}_{t} \mid  z_t, h_t\right)-\ln p_\varphi  \left(\tilde{r}_{t}  \mid  z_t, h_t\right),
\end{equation} 
which is a prediction loss that ensures $z_t$ captures features from wireless data $o_t$ and learns the credit assignment $\tilde{r}_{t}$. 
The dynamic loss $\mathcal{L}_{\text {dyn}}$ and the representation loss $\mathcal{L}_{\text {rep}}$ are, respectively, given by
\vspace{-0.2cm}
\begin{equation}\mathcal{L}_{\text {dyn }}(\varphi) =  \mathrm{D}_{\textrm{KL}}\left[\operatorname{sg}\left(q_\varphi\left(z_t  \mid h_t, o_t\right)\right) \| p_\varphi\left(\tilde{z}_t \mid h_t\right)\right],
\end{equation}  
\begin{equation}\mathcal{L}_{\text {rep }}(\varphi) =  \mathrm{D}_{\textrm{KL}}\left[q_\varphi\left(z_t \mid h_t, o_t\right) \| \operatorname{sg}\left(p_\varphi\left(\tilde{z}_t \mid h_t\right)\right)\right].
\end{equation} 
(7) and (8) ensure that $z_t$ and $h_t$ extract the network dynamics in the latent space, where $\operatorname{sg}(\cdot)$ is the stop-gradient operator, and $ \mathrm{D}_{\textrm{KL}}(\cdot)$ is the Kullback-Leibler divergence.

\begin{figure}[!t]
  \centering
  \includegraphics[width=0.8\linewidth]{./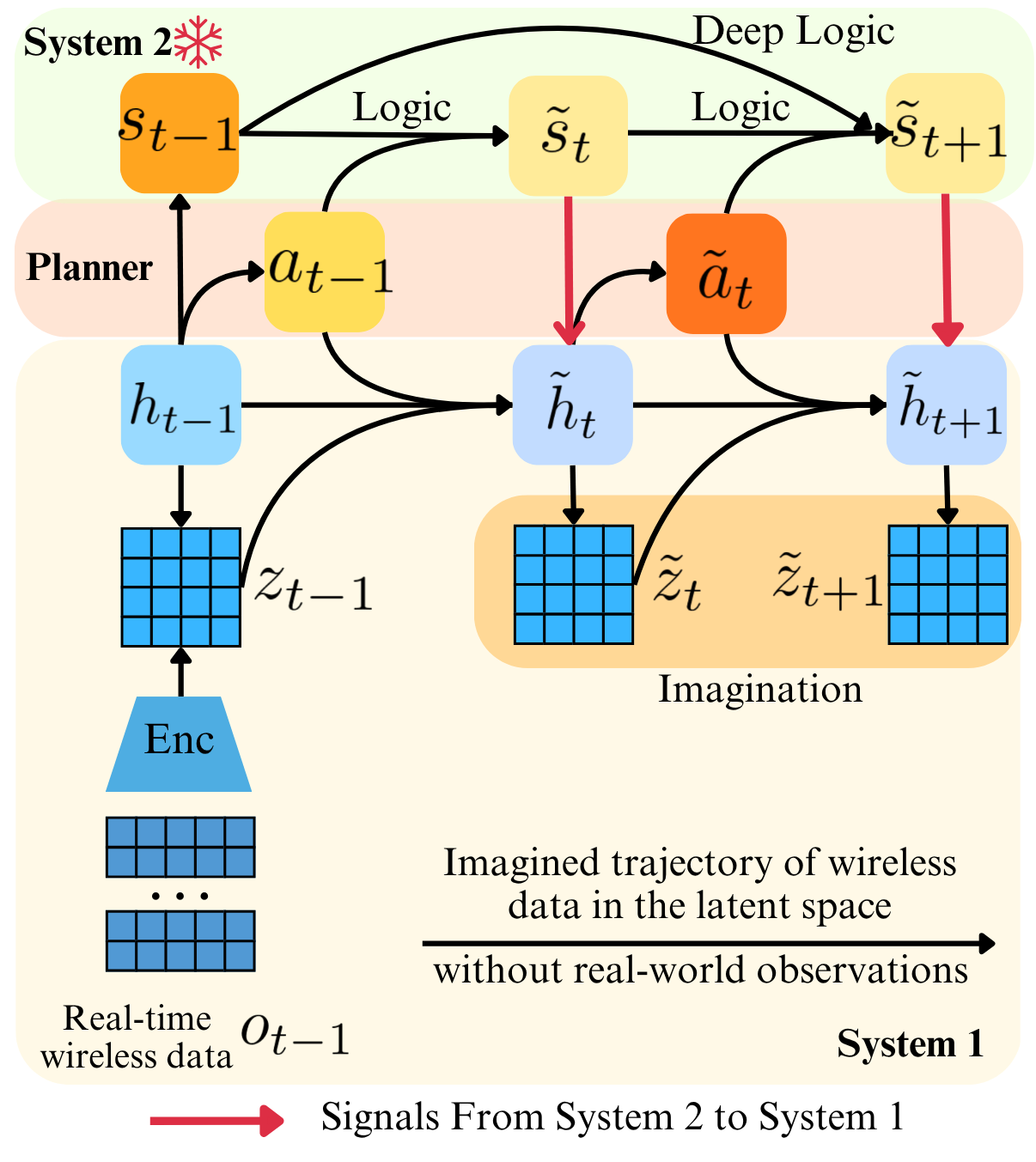}
    \vspace{-0.3cm}
  \caption{The logic-enhanced imagination ability of the proposed dual-mind world model for both policy learning, and joint prediction and scheduling without wireless data.}
  \vspace{-0.5cm}
  \label{Fig:Imag}
\end{figure}

In our proposed dual-mind world model, the RSSM-based System~1 captures the statistical dynamics of the V2X network by learning compact latent representations from observed wireless data, including CAoI, physical channel states, and vehicle locations. While this pattern-driven module enables fast and scalable inference, it is inherently limited in its ability to reason about long-term consequences of actions in the mmWave V2X environment. Specifically, System~1 relies purely on learned correlations from past data and lacks structural understanding of the underlying wireless mechanisms, such as mobility-induced channel disruptions, non-linear CAoI resets, and blockage-driven link changes. Hence, over extended prediction horizons, System~1 may not preserve the logical consistency of network state transitions. This is particularly problematic for highly complex wireless networks that require reliable, long-term planning under uncertainty, delay feedback, and physical dynamics. Thus, to overcome this limitation, we complement RSSM with a System 2 component with a logical reasoning ability, that capture and learn the logic of network state transitions.

\vspace{-0.5cm}
\subsection{Logic-Driven System~2 for Deep Inference}
We now introduce a System~2 component to learn the underlying logical relationships of wireless physics from actual network state transitions. Particularly, we will use the concept of LINN~\cite{shi2020neural} as the foundational module for System~2. Based on LINN, we further propose a deep logical reasoning framework to capture and infer the \textit{logical rules of the wireless network}, i.e., the causal, constraint-based relations among mobility, channel state, link availability, and scheduling decisions that decides how CAoI evolves. In particular, logical operations, including negation (\(\neg\)), disjunction (\(\lor\)), conjunction (\(\land\)) and implication (\(\rightarrow\)), are used to enable symbolic reasoning over discrete and structured relationships that cannot be captured purely by statistical models in wireless networks. For instance, in the considered V2X network, an effective scheduling decision should satisfy logical conditions, such as ``if a link is blocked, it cannot be scheduled,'' or ``if two links share a common node, they cannot be activated simultaneously,'' which reflects physics and logic-driven behavior rather than purely probability and statistics-driven behavior.

\subsubsection{Neural Network-Based Logic Operations} To endow the wireless network with the ability of logical thinking, the world model must capture the structural logical information among observations and actions. Similar to~\cite{shi2020neural}, we use neural networks to realize  
the logic operations of $\mathrm{AND}$, $\mathrm{OR}$, and $\mathrm{NOT}$, which can be respectively represented by 
\vspace{-0.1cm}
\begin{equation}
    \mathrm{AND}(d,a)=\boldsymbol{W}^a_{2}\sigma\left(\boldsymbol{W}^a_{1}(d\oplus a)+\boldsymbol{b}^a_1 \right) +\boldsymbol{b}^a_2,
\end{equation}
\begin{equation}
\mathrm{OR}(d,a)=\boldsymbol{W}^o_{2}\sigma\left(\boldsymbol{W}^o_{1}
    (d\oplus a)+\boldsymbol{b}^o_1 \right)+\boldsymbol{b}^o_2,
\end{equation}
\begin{equation}
\operatorname{NOT}(w)= \boldsymbol{W}^n_2 \sigma\left(\boldsymbol{W}^n_1 w+\boldsymbol{b}^n_1\right)+ \boldsymbol{b}^n_2 , w\in\{a,z\},
\end{equation}
where $\boldsymbol{W}^l_{1}$, $\boldsymbol{W}^l_{2}$, $\boldsymbol{b}^l_{1}$, $\boldsymbol{b}^l_{2}$ are the parameters of a logical neural network, $l = \{a,o,n\}$, and $\sigma(\cdot)$ is the activation function. 
Based on the basic logical operations (9)-(11),  the implication operation $\rightarrow$ is proposed to enable reasoning based on observations and actions for imagined state trajectories of the wireless network. Since the equivalence relationship of $\rightarrow$ is represented by $p \rightarrow q \Longleftrightarrow \neg \, p \vee q$,
we realize the operation IMPLY based on $\neg$ and $\land$, which is formally given by
$\operatorname{IMPLY}(d,a)= \operatorname{OR}(\operatorname{NOT}(z),a)$.
The neural logic operators in System~2 are designed to capture nonlinear, non-geometric logical relationships between wireless observations and scheduling actions that cannot be adequately modeled by standard geometric operations in vector space. For instance, the logical negation of a latent representations $z$, represented by $\mathrm{NOT}(z)$, represents the opposite logical condition in the network. If $z$ encodes a LoS condition, then $\mathrm{NOT}(z)$ represents a non-LoS (NLoS) condition, instead of a simple orthogonal vector $z^{\perp}$ in Euclidean space. 

To ensure that the learned operations (NOT, AND, OR, IMPLY) behave in a logically consistent manner, we incorporate a set of regularization rules derived from classical logic, as shown in Table I. These rules, such as double negation, identity, and complementation, are realized during training by penalizing violations through a regularization loss. In our formulation, the logical constants True and False are represented as fixed vectors $\mathbf{T}$ and $\mathbf{F}$, with $\mathbf{F}=\mathrm{NOT}(\mathbf{T})$. Each logical identity is converted into a differentiable constraint by measuring the similarity between the left and right sides of the rule using a cosine similarity function. These logical constraints regularize the neural operators by enforcing consistency between learned representations and the underlying causal rules of the wireless network. Hence, the System~1 component can learn not only from statistical data patterns but also from the causal, structural relations among mobility, channel state, link availability, and scheduling decisions. Such learning process improves generalization to unseen network states by preventing the model from producing physically or logically inconsistent predictions.
Taking the double negation of the operation $\neg$ as an example, the logical equation $\neg (\neg w) = w$ can be converted into a logical regularization item as $r_1 = \sum_{w \in W} 1 - \text{Sim}(\text{NOT}(\text{NOT}(w)), w)$, where $w\in\{a,z\}$, and $\mathrm{Sim}(w_1,w_2) = \sigma \left((w_1 \cdot w_2)/(\| w_1 \| \| w_2 \|) \right)$ measures the similarity between $w_1$ and $w_2$. Hence, the logical regularization loss can represented by $\mathcal{L}_{\mathrm{reg}} = \sum_{i} r_i$, where $r_i$ is the regularization item of each logical equation in Table \ref{Tab:Logic}.

\subsubsection{Proposed Deep Logical Reasoning}
We now propose a novel deep logical reasoning approach. Particularly, the logical relationships are explicitly captured from the state transitions of the wireless network through the logical operations (9)-(11), and then if-then rules of $s\land a \rightarrow s^{\prime}$ are learned for reasoning chain. First, the logical information $\eta_{t}$ of the premise ($o_t$, $a_t$) at timeslot $t$ can be extracted by
\vspace{-0.2cm}
\begin{equation}
    \eta_{t} \triangleq (o_t \land a_t) = \mathrm{AND}(o_t, a_t), \, \forall t.
\end{equation}
Then, the implication operation $\rightarrow$ is used to align local logic between the premise ($o_t$, $a_t$) and the conclusion ($o_{t+1}$), which can be represented by
$
    \phi_t \triangleq (\eta_{t} \rightarrow o_{t+1}) = \mathrm{IMPLY}(\eta_{t},o_{t+1}), \forall t.
$
Although $\phi_t$ captures single-step logical information for the network state transitions, it cannot capture logical dependence among network states and link scheduling over long horizons for the complex CAoI minimization (\ref{P:CAoI}) that requires the long-term planning. Hence, we propose the deep recursive implication reasoning approach given by:
\vspace{-0.15cm}
\begin{equation}
  \begin{aligned}
    & \phi^{\alpha}_{t} \triangleq (\eta_{t-\alpha} \cdots \land \eta_{t-1} \land \eta_{t} \rightarrow o_{t+1})\\
    & \qquad = \mathrm{IMPLY}(\mathrm{AND}(\cdots,\mathrm{AND}(\eta_{t-1},\eta_{t})),o_{t+1}),
  \end{aligned}
\end{equation}
where $\alpha < t$ represents inference depth. 
The recursive logic $\phi^{\alpha}_{t}$ models the temporal propagation of logical dependencies within wireless networks, and ensures global logical consistency by recursively linking the logical state at time $t$ to those of earlier slots, thereby encoding long-term causal dependencies among network states. The reasoning chain for deep thinking that integrates both local and global logical relationships of the network over a period $T$ is given by
\vspace{-0.2cm}
\begin{equation}
  \begin{aligned}
    & L^{\alpha}_{T}  \triangleq (\phi^{\alpha}_1 \land \phi^{\alpha}_2 \land \phi^{\alpha}_3 \cdots \phi^{\alpha}_{T-1} \rightarrow \mathbf{T}),\\
    & = \mathrm{IMPLY}(\mathrm{AND}(\cdots,\mathrm{AND}(\phi^{\alpha}_{T-2},\phi^{\alpha}_{T-1})),\mathbf{T}).
    \end{aligned}
\end{equation}
The logical loss with inference depth $\alpha$ can be represented by
\vspace{-0.2cm}
\begin{equation}
    \mathcal{L}^{\alpha}_{\text{log}} = \frac{1}{T-1} \sum_t \mathrm{Sim}(\phi^{\alpha}_{t}, \mathbf{T}) - \mathrm{Sim}(\phi^{\alpha}_{t}, \mathbf{F}).
\end{equation}

\begin{table}[!t]
    \centering
    \caption{Logical Regularizations for System~2}
    \vspace{-0.2cm}
    \label{Tab:Logic}
    \begin{tabular}{c l l}
    \hline
    \hline
    \textbf{Operation}&\textbf{Logical Rule} & \textbf{Logical Equation} \\
    \hline
    \multirow{1}*{$\neg$}
    &\textbf{Double Negation} & $\neg (\neg w) = w$ \\
    \hline
    \multirow{4}*{$\land$}&\textbf{Identity} & $w \land \mathbf{T} = w$ \\
    &\textbf{Annihilator} & $w \land \mathbf{F} = \mathbf{F}$  \\
    &\textbf{Idempotence} & $w \land w = w$  \\
    &\textbf{Complementation} & $w \land \neg w = \mathbf{F}$ \\
    \hline
    \multirow{4}*{$\lor$}&\textbf{Identity} & $w \lor \mathbf{F} = w$ \\
    &\textbf{Annihilator} & $w \lor \mathbf{T} = \mathbf{T}$ \\
    &\textbf{Idempotence} & $w \lor w = w$ \\
    &\textbf{Complementation} & $w \lor \neg w = \mathbf{T}$ \\
    \hline
    \multirow{5}*{$\rightarrow$}&\textbf{Identity} & $w \rightarrow \mathbf{T} = \mathbf{T}$  \\
    &\textbf{Annihilator} & $w \rightarrow \mathbf{F} = \neg w$ \\
    &\textbf{Idempotence} & $w \rightarrow w = \mathbf{T}$  \\
    &\textbf{Complementation} & $w \rightarrow \neg w \equiv \neg w$ \\
    \hline
    \hline
    \end{tabular}
    \vspace{-0.6cm}
\end{table}

\begin{figure*}[!t]
\begin{equation}
    \label{eq:bound}
\begin{aligned}
\ln \tilde{p}_\varphi\left(o_{1: T} \mid a_{1: T}\right) \ge & \sum_{t=1}^T(\underbrace{\mathbb{E}_{q_1}\left[\ln p_\varphi\left(o_t \mid z_t\right)\right]}_{\mathrm{Decoding \ Loss}} + \underbrace{\mathbb{E}_{q_1}\left[\ln \phi^{\alpha}_{t} \right]}_{\mathrm{Logic \ Limit}} -\underbrace{\mathbb{E}_{q_2}\left[ \mathrm{D}_{\textrm{KL}}\left[q_\varphi\left(z_t \mid o_{\leq t}, a_{<t}\right) \| p_\varphi\left(z_t \mid z_{t-1}, a_{t-1}\right)\right]\right]}_{\mathrm{Prediction \ Loss}}).
\end{aligned}   
\end{equation}
\hrulefill
\vspace{-0.5cm}
\end{figure*}

\noindent Let $\zeta=\{\boldsymbol{W}^l_{1},\boldsymbol{W}^l_{2},\boldsymbol{b}^l_{1},\boldsymbol{b}^l_{2}\}$ be the parameter set of the System~2 component. The total loss function of System~2 will be $\mathcal{L}_{\text{S2}}(\zeta) = \mathcal{L}^{\alpha}_{\text{log}} + \beta \mathcal{L}_{\text{reg}}$,
where $\beta \in (0,1)$ is the weight factor. To ensure the order-independence, i.e., $b \land a = b \land a$ and $b \lor a = b \lor a$, the order of inputs for AND and OR is randomly set during both offline training and online testing. 

\subsection{Inter-System Signal Mechanism}
The integration of System~1 and System~2 is essential to combine fast statistical inference with deep logical reasoning. To realize it, we enable interaction between System~1 and System~2 by an inter-system signal mechanism. Particularly, as shown in Fig.~\ref{Fig:System}, the System~1 component provides the System~2 component with actual observations of the wireless network. These observations serve as the labeled data, based on which the System~2 component learns the logical relationships of wireless network state transitions. Particularly, real-world trajectories $\mathcal{J} = \{s_{1:t},a_{1:t},r_{1:t}\}$ from System~1 are fed into System~2 to minimize the loss function (15), where $s_{t}=\left\{z_t,h_t\right\}$ captures both the stochastic and deterministic states of the wireless network. As shown in Fig.~\ref{Fig:Imag}, the logical consistency loss from the System~2 component guides the System~1 component during imagination, where the predictions of the latent network representations must follow the logical consistency.
Based on this process, we propose the \emph{logic-enhanced conditional latent-variable model} as follows.

\vspace{-0.25cm}
\begin{definition}
  We define a logic-enhanced conditional latent-variable model for the RSSM-based System~1 component with logical consistency, which is given by
$
  \tilde{p}_\varphi(o_{1:T}, z_{1:T} \! \mid \! a_{1:T}) \! = \! \prod_{t=1}^T \! p_\varphi(o_t \! \mid \! z_t) p_\varphi(z_t \! \mid \! z_{t-1}, a_{t-1}) \phi^{\alpha}_{t}. 
  $
\end{definition}
\vspace{-0.4cm}

\begin{theorem}
  Considering the logical signals from the System~2 component to the System~1 component, the LE-ELBO of the imagination loss can be bounded by (\ref{eq:bound}), where $q_1=q_\varphi\left(z_t \mid o_{\leq t}, a_{<t}\right)$, $q_2=q_\varphi\left(z_{t-1} \mid o_{\leq t-1}, a_{< t-1}\right)$, and the prior state is approximately obtained by $q_\varphi(z_{1:T} \mid o_{1:T}, a_{1:T}) = \prod_{t=1}^{T} q_\varphi(z_t \mid h_t, o_t)$.
 \end{theorem}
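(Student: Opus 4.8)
The plan is to establish (\ref{eq:bound}) as a standard evidence-lower-bound computation driven by Jensen's inequality, using the filtering posterior $q_\varphi(z_{1:T}\mid o_{1:T},a_{1:T})=\prod_{t=1}^{T}q_\varphi(z_t\mid h_t,o_t)$ as the variational importance distribution and the logic-enhanced joint of Definition~3 as the (possibly unnormalized) target density. First I would write the marginal likelihood as $\tilde{p}_\varphi(o_{1:T}\mid a_{1:T})=\int \tilde{p}_\varphi(o_{1:T},z_{1:T}\mid a_{1:T})\,dz_{1:T}$, multiply and divide the integrand by $q_\varphi(z_{1:T}\mid o_{1:T},a_{1:T})$, and apply Jensen's inequality to the concave logarithm. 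This yields the intermediate bound $\ln \tilde{p}_\varphi(o_{1:T}\mid a_{1:T})\ge \mathbb{E}_{q_\varphi}\!\big[\ln \tilde{p}_\varphi(o_{1:T},z_{1:T}\mid a_{1:T})-\ln q_\varphi(z_{1:T}\mid o_{1:T},a_{1:T})\big]$, which holds because $q_\varphi$ is a genuine normalized distribution regardless of whether $\tilde{p}_\varphi$ integrates to one.

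Next I would substitute the two factorizations. Inserting the Definition~3 product for $\tilde{p}_\varphi$ and the product form of $q_\varphi$, the logarithm of the ratio telescopes into a sum over $t$ of three per-step contributions: the decoding term $\ln p_\varphi(o_t\mid z_t)$, the logic potential $\ln\phi^{\alpha}_{t}$, and the transition ratio $\ln\frac{p_\varphi(z_t\mid z_{t-1},a_{t-1})}{q_\varphi(z_t\mid h_t,o_t)}$. At this stage the expectation is still taken under the full joint posterior $q_\varphi(z_{1:T}\mid o_{1:T},a_{1:T})$, so the remaining work is to marginalize each summand down to the correct low-dimensional factor.

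The third step reduces the full-joint expectations to the marginals that appear in (\ref{eq:bound}). The decoding and logic summands depend (for fixed $t$, up to the fixed logic window) only on $z_t$; integrating out the remaining latents under the autoregressive $q_\varphi$ leaves the filtering marginal $q_1=q_\varphi(z_t\mid o_{\leq t},a_{<t})$, which is consistent because the deterministic state $h_t=f_\varphi(h_{t-1},z_{t-1},a_{t-1})$ is precisely the sufficient summary of $(o_{\leq t},a_{<t})$, so $q_\varphi(z_t\mid h_t,o_t)$ coincides with conditioning on the full past. For the transition ratio I would first take the expectation over $z_t$ conditioned on $z_{t-1}$, recognizing it as $-\,\mathrm{D}_{\textrm{KL}}\big[q_\varphi(z_t\mid o_{\leq t},a_{<t})\,\|\,p_\varphi(z_t\mid z_{t-1},a_{t-1})\big]$, and then take the outer expectation over $z_{t-1}\sim q_2=q_\varphi(z_{t-1}\mid o_{\leq t-1},a_{<t-1})$. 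Collecting the three marginalized pieces reproduces the Decoding Loss, Logic Limit, and Prediction Loss terms, respectively, completing the bound.

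The main obstacle is conceptual rather than computational: the logic factor $\phi^{\alpha}_{t}$ is produced by the neural implication operator and is not a normalized conditional density, so $\tilde{p}_\varphi$ must be treated as an unnormalized, product-of-experts (energy-based) joint in which $\phi^{\alpha}_{t}$ acts as a nonnegative logical-consistency weight obtained through the cosine-similarity/sigmoid scoring of Section~IV.B. I would therefore justify that Jensen's step is unaffected by the missing normalizer, since only $q_\varphi$ needs to be a probability distribution, and absorb any normalization constant as an additive term that does not disturb the displayed inequality direction. A secondary subtlety to handle carefully is that the variational product for $q_\varphi$ is autoregressive through $h_t$ rather than a product of independent marginals, so the passage from the joint expectation to the per-step marginals $q_1$ and $q_2$ must be argued from the RSSM filtering/Markov structure, not from factorized independence.
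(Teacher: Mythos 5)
Your proposal follows essentially the same route as the paper's Appendix~A: write the marginal likelihood as an importance-weighted expectation under the filtering posterior $q_\varphi(z_{1:T}\mid o_{1:T},a_{1:T})=\prod_t q_\varphi(z_t\mid h_t,o_t)$, apply Jensen's inequality, substitute the Definition~2 factorization together with the variational product, and regroup the per-step terms into the decoding, logic, and KL prediction components under $q_1$ and $q_2$. Your added remarks on the unnormalized product-of-experts role of $\phi^{\alpha}_{t}$ and on the autoregressive (rather than independent) structure of the variational posterior are correct refinements of steps the paper leaves implicit, not a different argument.
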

\begin{proof}
  See Appendix A.
\end{proof}
\vspace{-0.2cm}
Theorem 1 establishes a principled connection between logical reasoning and variational imagination by showing how the logical consistency of System~2 to tighten the ELBO bound of System~1's prediction loss over extended horizons. It enables more reliable and logically consistent trajectory predictions, which are essential for long-horizon planning in highly dynamic and complex wireless networks. 

The proposed inter-system signaling mechanism enables structured coordination between pattern-based prediction and logic-based correction. During training, the System~1 component provides latent wireless states extracted from real-world network observations, which serve as the foundation for the System~2 component to learn underlying logical relationships. During imagination, the System~2 component imposes logical constraints on the System~1 component to enable long-horizon, reliable predictions for the wireless network.

\vspace{-0.3cm}
\subsection{Learning Link Scheduling in Imagined Trajectories}
As shown in Fig.~\ref{Fig:Imag}, the imagination ability of the proposed dual-mind world model is used to simulate the future stochastic state $\{\tilde{z}_t\}$ of the wireless network for policy learning. It is data efficient since the policy is learned in the imagined trajectories without relying on high-cost actual interactions and real-time feedback from the real-world wireless network, as are the cases in RL. Moreover, the differentiable imagination provides long-horizon predictions to evaluate the current policy and attributes the delayed returns back to earlier actions, thus a long-term policy can be learned. Particularly, the predicted stochastic state is recurrently obtained by $\tilde{z}_t \sim p_\varphi\left(\tilde{z}_t \mid h_t\right)$ and $h_t = f_\varphi\left(h_{t-1}, \tilde{z}_{t-1}, \tilde{a}_{t-1}\right)$. Then, an imagined trajectory of the wireless network can be formulated as $\tilde{\mathcal{J}}_{t-1}=\left\{\tilde{s}_{t:t+H},\tilde{a}_{t:t+H},\tilde{r}_{t:t+H}\right\}$, where the state $\tilde{s}_{t}=\left\{\tilde{z}_t,h_t\right\}$ encodes the wireless data at timeslot $t$, and $H$ represents the horizon size of imagination.

Let $\mathcal{S}$ be the state space and $\mathcal{A}$ be the action space.
We apply the actor-critic framework as the planner to learn link scheduling in imagined trajectories of the wireless network. Particularly, the actor-critic model involves two components: the actor component $\tilde{a}_{\tau} \sim q_{\theta}\left(\tilde{a}_{\tau} \mid \tilde{s}_{\tau}\right)$ for policy learning and the critic component $v_\psi(\tilde{s}_{\tau}) \approx \mathbb{E}_{q(\cdot \mid \tilde{s}_{\tau})} \left( \sum_{t=\tau}^H \gamma^{t-\tau} \tilde{r}_{\tau} \right)$ for state value estimation,
where $\psi$ represents the parameter of the critic, and $\theta$ represents the parameter of the actor.
With the imagined trajectory $\tilde{\mathcal{J}}$, the actor learns to maximize the return value by link scheduling, and the critic learns to evaluate the long-term return from CAoI. Hence, the actor and the critic can be respectively optimized by
\vspace{-0.2cm}
\begin{equation}
  \begin{aligned}
    &\theta^* = \max_\theta \mathbb{E}_{q_\phi, q_\theta} \left[ \sum_{\tau=t}^{t+H} V_\lambda(\tilde{s}_{\tau}) \right] ,\\
    &\psi^* = \min_\psi \mathbb{E}_{q_\phi, q_\theta} \left[ \sum_{\tau=t}^{t+H} \frac{1}{2} \left( v_\psi(\tilde{s}_{\tau}) - V_\lambda(\tilde{s}_{\tau}) \right)^2 \right].
  \end{aligned}
\end{equation}
To evaluate the long-term performance of the network, the value $V_\lambda(\tilde{s}_{\tau})$ with discount weight $\lambda$ is given by 
\vspace{-0.2cm}
\begin{equation}
  \begin{aligned}
    &V_\lambda(\tilde{s}_{\tau}) = (1 - \lambda) \left( \sum_{n=1}^{H-1} \lambda^{n-1} V_n^N(\tilde{s}_{\tau}) \right) + \lambda^{H-1} V_H^N(\tilde{s}_{\tau}), \\
    &V_k^N(\tilde{s}_{\tau}) = \mathbb{E}_{q_\phi, q_\theta} \left[ \sum_{n=\tau}^{h-1} \gamma^{n-\tau} \tilde{r}_{n} + \gamma^{h-\tau} v_\psi(\tilde{s}_{h}) \right],
  \end{aligned}
\end{equation}
where $h = \min(\tau + k, t + H)$. Moreover, the actual reward $r_t$ of the network during the timeslot $t$ is designed as
$
  r_t = - \frac{1}{V} \sum_{v}\left[ A^{v}_t - \mathbb{I}(A^v_t > \bar{A}) (\bar{A}-A^v_t)\right]
$.

In a real-world mmWave V2X network, it is difficult and inefficient to obtain the real-time wireless data $\{o_t\}$ for each extremely short timeslot $t$ when the size of wireless data $\{o_t\}$ is large. 
In this context, the imagination ability of the world model can be leveraged for joint prediction of the wireless data and the link scheduling without real-time data collection in practical applications, as illustrated in Fig.~3. Given the deterministic trajectory $\mathcal{J}[c]=\left\{s[1:c],a[1:c],r[1:c]\right\}$ that is collected from actual scenario over $c$ deterministic timeslots, the world model can predict a trajectory $\tilde{\mathcal{J}}[c]=\left\{\tilde{s}[c+1:c+Y],\tilde{a}[c+1:c+Y],\tilde{r}[c+1:c+Y]\right\}$ for a few, future timeslots $Y$. It is practical for real-world wireless applications. For instance, in the absence of real-world observations $\{o[c+1:c+Y]\}$ from wireless sensors, the world model can infer the future state representations $\tilde{\mathcal{J}}[c]$ of the wireless network from historical observations. These predicted states serve as imagined environments for planning the upcoming link scheduling actions $a[c+1:c+Y]$.

In a nutshell, the proposed dual-mind world model-based learning approach addresses the CAoI minimization problem in (4) by jointly learning statistical pattern-driven System~1 that captures the dynamics of wireless data, and logic-driven System~2 that recognizes the logical relationships of the wireless network state transitions. Then, a long-term policy is learned in long-horizon imagined trajectories with logical consistency. Hence, the proposed dual-mind world model approach addresses the following challenges: (a) It provides more reliable imagined trajectories for wireless networks to alleviate the accumulated prediction errors over an extended horizon compared to independent System~1, and ensures logical consistency of imagination even with unseen states, (b) It can easily attribute delayed rewards back to earlier link scheduling since the imagination is differentiable, (c) It is highly data-efficient since the link scheduling is trained in imagination instead of real-world network interactions and real-time wireless data, and (d) It ensures wireless networks can learn the long-term planning since imagined trajectories provides foresight returns of policies over a long horizon $H$.

Here, we define necessary notations to better introduce and analyze the proposed dual-mind world model as follows.
Let the training episodes be $N^{\mathrm{tra}}$, seed episodes be $N^{\mathrm{seed}}$, batch size be $\Theta$, sequence length be $L$, replay buffer be $\mathcal{D}$, collect interval be $N^{\mathrm{col}}$, and the learning rates of parameters $\vartheta$, $\psi$, $\phi$, and $\zeta$ respectively be $\rho_{\vartheta}$, $\rho_{\psi}$, $\rho_\phi$, and $\rho_\zeta$. The training process of the proposed dual-mind world with the actor-critic-based planner for wireless networks is summarized in Algorithm \ref{Alg:WMT}, and the practical use case without real-time available wireless data is summarized in Algorithm \ref{Alg:WMT2}. The overall training complexity of the proposed approach is $O\left(N^{\operatorname{tra}}\times N^{\mathrm{col}}(\Theta L+H+\alpha)C\right)$, which corresponds to a one-step forward-backward computation per training iteration. This complexity scales linearly with the number of training episodes and collected samples, and is comparable to that of standard model-based reinforcement learning methods, while providing improved data efficiency through imagination-based policy learning. In actual deployments, the proposed dual-mind model provides rapid inference through the System~1 component without the need of extra inference overhead from the System~2 component. Hence, the proposed dual-mind world model can be used in wireless networks with low latency computing requirements. Moreover, the objective of the world model is to learn and predict the dynamics of the wireless network, and to construct a foresighted planner that learns a near-optimal scheduling policy for the CAoI minimization problem (\ref{P:CAoI}), rather than solving it in a closed-form manner.

\setlength{\textfloatsep}{1pt} 
\setlength{\intextsep}{1pt}    
\begin{algorithm}[H]
    \caption{Proposed Dual-Mind World Model With Actor-Critic-Based Planner for Wireless Networks}
    \label{Alg:WMT}
    \scriptsize
 \begin{algorithmic}
    \STATE Initialize the wireless network, and $\mathcal{D}$ with $N^{\mathrm{seed}}$ episodes.
    \FOR{Training episode $n^{\mathrm{tra}} \rightarrow N^{\mathrm{tra}}$}
    \FOR{Collect interval $n^{\mathrm{col}} \rightarrow N^{\mathrm{col}}$}
    \STATE // Learn Network Patterns By System~1
    \STATE Sample $\Theta$ sequences $\{(o_t, a_t, r_t)\}_{t=k}^{k+L} \sim \mathcal{D}$.
    \STATE Predict prior $\tilde{z}_t$, $\tilde{r}_t$ with $h_t$, and decode $\hat{o}_t$.
    \STATE Update RSSM $\phi \gets \phi - \rho_\phi \nabla_\phi \mathcal{L}_{\text{S1}}(\phi)$.
    \STATE $\triangleright$ Learn Network Logical Rules By System~1
    \STATE Self-supervised learn logic rules by $\mathcal{L}_{\text{reg}}$.
    \STATE Learn logic from System~1's network states by $\mathcal{L}_{\text{log}}$.
    \STATE Update LINN $w \gets \zeta - \rho_\zeta \nabla_\zeta \mathcal{L}_{\text{S2}}(w)$.
    \STATE $\triangleright$ Train Actor-Critic Based Planner In Imagination
    \STATE Act in imagination $\{(\tilde{z}_{\tau}, a_\tau)\}_{\tau=t}^{t+H}$ from actual $z_t$.
    \STATE Estimate value $V_\lambda(s_\tau)$ with imagined rewards $\{\tilde{r}_{\tau}\}$.
    \STATE  $\vartheta \gets \vartheta + \rho_{\vartheta} \nabla_\vartheta \sum_{\tau=t}^{t+H} V_\lambda(s_{\tau})$.
    \STATE  $\psi \gets \psi - \rho_{\psi} \nabla_\psi \sum_{\tau=t}^{t+H} \frac{1}{2} \| v_\psi(s_{\tau}) - V_\lambda(s_{\tau}) \|^2$.
    \STATE $\triangleright$ Logical Rules From System~2 to System~1
    \STATE Ensure logic consistency of $\{(\tilde{z}_{\tau}, a_{\tau})\}$ by $\mathcal{L}_{\text{log}}$
    \STATE $\triangleright$ Differentiable Feature of Imagination
    \STATE Update RSSM $\psi \gets \psi - \rho_\psi \nabla_\psi \mathcal{L}_{\text{S2}}(\psi)$.
    \ENDFOR
    \STATE Reset environments of the wireless network.
    \FOR{Time step $t \rightarrow T$}
    \STATE Obtain $h_t$ and $z_t$ from $o_t$ by System~1.
    \STATE Plan $a_t \sim q_\vartheta(a_t \mid z_t)$ and act in the network.
    \ENDFOR
    \STATE Add experience to buffer $\mathcal{D} \gets \mathcal{D} \cup \{(o_t, a_t, r_t)\}_{t=1}^T$.
    \ENDFOR
    \STATE Return $\phi^*$, $\zeta^*$, $\vartheta^*$ and $\psi^*$.
 \end{algorithmic}
 \end{algorithm}
 
 \begin{algorithm}[H]
    \caption{Practical Use Case of Joint Prediction And Link Scheduling without Real-Time Available Wireless Data}
    \label{Alg:WMT2}
    \scriptsize
 \begin{algorithmic}
    \STATE Deploy the trained world model with $\phi^*$, $w^*$, $\vartheta^*$ and $\psi^*$.
    \FOR{Time step $t \rightarrow T$}
    \IF{Obtain wireless data at timeslot  $t$}
    \STATE Obtain $h_t$ and $z_t$ from $o_t$ by System~1.
    \STATE Plan $a_t \sim q_\vartheta(a_t \mid z_t)$ and act in real world.
    \ELSE
    \STATE Imagine $\tilde{h}_t$ and $\tilde{z}_t$ from $h_{\leq t-1}$ by System~1.
    \STATE Plan $a_t \sim q_\vartheta(a_t \mid \tilde{z}_t)$ and act in real world.
    \ENDIF
    \ENDFOR
 \end{algorithmic}
\end{algorithm}

\begin{figure}[t!]
	\begin{minipage}[b]{\columnwidth}
		\centering
		\subfloat[The Flushing Avenue of New York on OpenStreetMap from the ArcGIS satellite.]{\includegraphics[width=0.8\linewidth]{./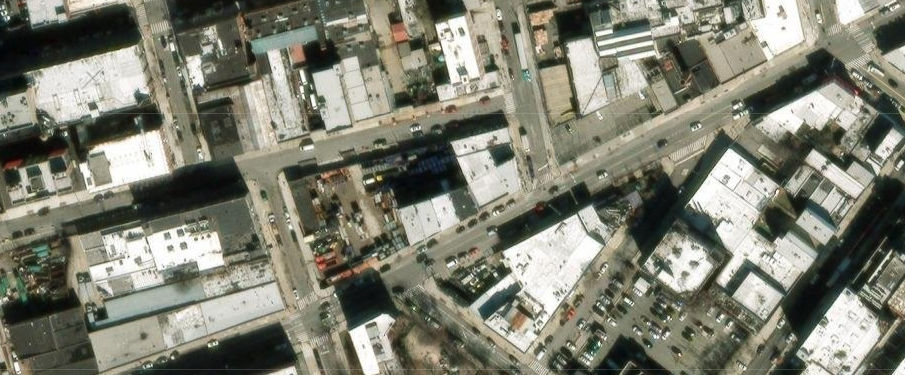}}
    \vspace{-0.15cm}
    \\
		\subfloat[Blender-based 3D scenario creation and rendering for a real-world environment importing from OpenStreetMap.]{\includegraphics[width=0.8\linewidth]{./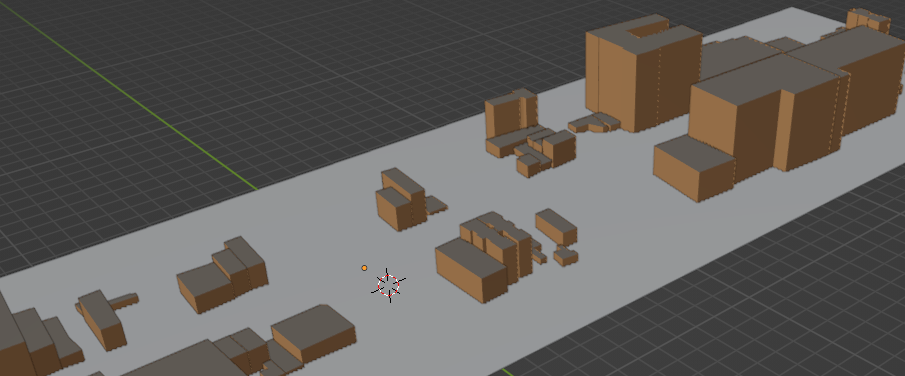}}
    \vspace{-0.2cm}
    \\
		\subfloat[Sionna-based simulator with mobile vehicles for realistic LoS links, specular reflection, diffuse reflection, and refraction.]{\includegraphics[width=0.8\linewidth]{./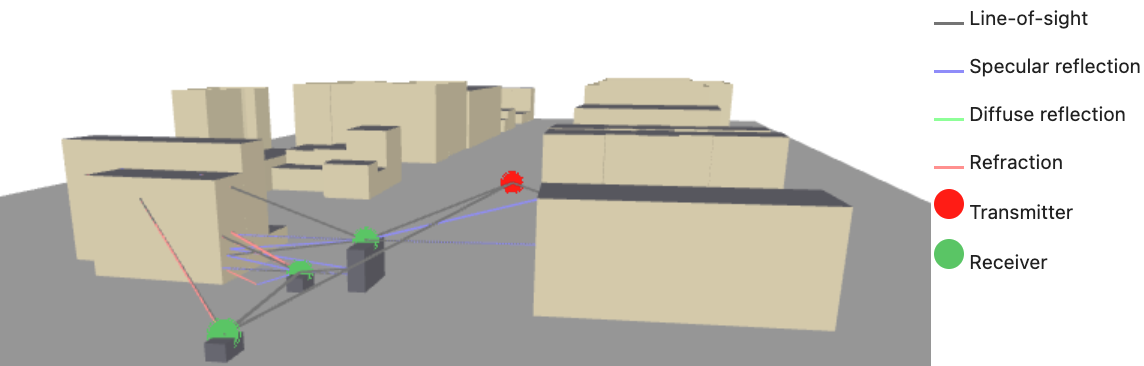}}
    \vspace{-0.05cm}
	\end{minipage}
	\caption{Procedures of the proposed realistic simulator based on Sionna, Blender, ArcGIS, Mitsuba, and plug-in of Blender-OSM and Mitsuba-Blender.}
  \label{Fig:Sio}
  \vspace{-0.2cm}
\end{figure}

\vspace{-0.5cm}
\section{Simulation and Analysis}
\vspace{-0.2cm}
\subsection{Realistic Sionna-based Simulator}
\vspace{-0.2cm}
For our simulations, we develop a novel realistic simulator based on Sionna, Blender, ArcGIS, Mitsuba, and plug-in of Blender-OSM and Mitsuba-Blender~\cite{sionna}, as shown in Fig.~\ref{Fig:Sio}. Particularly, we first select the urban scenario on OpenStreetMap, where we choose the Flushing Avenue of New York, as shown in Fig.~\ref{Fig:Sio}(a). Then, as shown in Fig.~\ref{Fig:Sio}(b), we load the selected scenario into Blender, which is an industrial 3D rendering suite, by the plug-in of Blender-OSM to create the Mitsuba files. Finally, as shown in Fig.~\ref{Fig:Sio}(c), the Mitsuba files are imported to Sionna to create a realistic physical scenario, and the application programming interfaces provided by Sionna are invoked to simulate the real-world signal propagation over the links of LoS, specular reflection, diffuse reflection, and refraction. The mobile vehicles are generated by using the Mitsuba-Python tool and dynamically added to the constructed physical scenario.
Based on the proposed Sionna-based realistic simulator, we generate a realistic urban mmWave V2X scenario, which provides the physics-enhanced end-to-end channel models and ray-tracing data along with different material properties of scene objects. 
The ray tracing data serves as the real-world physical channel data $\Xi$, which consists of the delay of multipath, azimuth and zenith angles of departure (AoD), azimuth and zenith angles of arrival (AoA), time of departure (ToD), and the time of arrival (ToA). Here, we only select the strongest path of all links for each vehicle to characterize the channel propagation features.

\begin{table}[!t]
  \centering
  \caption{Hyperparameters}
  \vspace{-0.2cm}
  \begin{tabular}{|l|c|c|}
  \hline
  \textbf{Parameter} & \textbf{Symbol} & \textbf{Value} \\ \hline
  \multicolumn{3}{|l|}{\textbf{Environment}} \\ \hline
  Number of vehicles & $\Psi$ & 8 \\ 
  Packet size & $S$ & 5 $~\mathrm{MB}$ \\
  Number of packets & $C^u$ & 25  \\
  Bandwidth & $B$ & 100 $~\mathrm{MHz}$ \\ 
  Frequency & $f_c$ & 26 $~\mathrm{GHz}$ \\ 
  Transmit power & $P_v$, $P_u$ & 23 $~\mathrm{dBm}$ \\
  Timeslot duration & $\xi$ & 100 $~\mathrm{ms}$ \\
  Period & $T$ & 100 \\ 
  Number of antennas & --- & 4  \\
  CAoI tolerance & $\bar{A}$ & 8 \\
  Vehicle speed & --- & 15-20 $~\mathrm{m/s}$\\ 
  Vehicle security distance & --- & 20 $~\mathrm{m}$\\ \hline
  \multicolumn{3}{|l|}{\textbf{Proposed dual-mind world model framework}} \\ \hline
  Seed episode & $N^{\mathrm{seed}}$ & 5 \\ 
  Sequence length & $L$ & 64 \\ 
  Training episodes & $N^{\mathrm{tra}}$ & 1e3 \\
  Collect interval & $N^{\mathrm{col}}$ & 100 \\
  Replay buffer size & $|\mathcal{D}|$ & 1e6 \\ 
  Batch size & $\Theta$ & 50 \\ 
  Imagination horizon & $H$ & 30 \\ 
  Stochastic state size & $|z_t|$, $|\tilde{z}_t|$ & 256 \\ 
  Deterministic state size & $|h_t|$, $|h_t|$ & 256 \\ 
  Activation layer function & --- & Relu \\ 
  Loss weights & $\delta_{_{\text {dyn}}}$, $\delta_{_{\text {rep}}}$ & 1 \\ 
  Reasoning depth & $\alpha$ & 30 \\ 
  Logic vector size & $|v|$, $|m|$ & 64 \\
  System 1 optimizer & --- & Adam ($\epsilon$ = 1e-4) \\ 
  System 1 learning rate & $\rho_\phi$ & 1e-3 \\
  System 2 optimizer & --- & SGD ($\epsilon$ = 1e-4) \\ 
  System 2 learning rate & $\rho_\zeta$ & 1e-2 \\ 
  \hline
  \multicolumn{3}{|l|}{\textbf{Actor-critic for policy learning}} \\ \hline
  Exploration noise & --- & 0.3 \\
  Return lambda & $\lambda$ & 0.95 \\ 
  Planning horizon discount & $\gamma$ & 0.99 \\ 
  Actor-critic optimizer & --- & Adam ($\epsilon$ = 1e-4) \\ 
  Learning rate & $\rho_{\vartheta}$, $\rho_{\psi}$ & 1e-4 \\ 
  \hline
  \end{tabular} 
  \label{Tab:1}
\end{table}

\vspace{-0.5cm}
\subsection{Parameter Setup}
An urban road with 200 meter length and $\Upsilon=3$ parallel lanes is consider as a physical scenario without the lane-changing behavior of vehicles. For the proposed dual-mind world model, we use typical parameters as in \cite{hafner2023mastering} and \cite{wang2025dmwm}. All training is conducted on a NVIDIA RTX 4070 GPU, and the training of the proposed world model takes approximately 0.4 GPU days, not accounting for the time of ray-tracing data collection. All of the hyperparameters are presented in Table \ref{Tab:1}. For comparison, we benchmark the proposed dual-mind world model (DMWM) against state-of-the-art baselines including the model-free discrete soft actor-critic (MFRL-SAC) approach~\cite{wang2023adaptive}, the model-based policy optimization (MBRL-MBPO) approach~\cite{janner2019trust}, and our prior proposed world model (WM-System~1) that only considers System~1~\cite{wang2025world}. 

\vspace{-0.5cm}
\subsection{Data Efficiency}

Fig.~5 and Fig.~6 show the average test rewards over 100 test episodes under limited environment steps and under limited environment trials, respectively. The network steps represent the amount of wireless data used for training from the actual V2X network, and the environment trials refers to the number of network learning opportunities, where once the CAoI of the network exceeds the maximum tolerance, one learning opportunity ends. The measurement of environment trials can capture both practical learning opportunities and safety constraints, thus ensuring efficiency while preventing unsafe sample accumulation. As shown in Fig.~ 5 and Fig.~6, the proposed DMWM-based learning approach exhibits significantly improved data efficiency over the traditional RL methods and the existing world model methods. From Fig. 5, we can see that DMWM achieves a better performance with only $2 \times 10^5$ environment steps compared to $5 \times 10^6$, $1 \times 10^7$ and $5 \times 10^5$ environment steps required by MBPO, DSAC and WM-System~1, respectively. This is due to the fact that the world model decouples environment cognition from policy learning by constructing a predictive latent-space model of the network dynamics, accurately captures the dynamics and uncertainty of V2X networks with this predictive model, and learns long-term policies in differentiable imagined trajectories rather than a great number of environment interactions. In contrast, MFRL-SAC relies on excessive environment interactions due to its trial-and-error mechanism, and MBRL-MBPO cannot address the long-horizon error accumulation without reasoning and predictive latent-space state representations. Hence, world model-based learning approaches overcomes the low data efficiency of the existing learning approaches. Compared to the WM-System~1, the proposed DMWM can learn underlying logical rules from the network dynamics, thus achieving 2-fold improvement in data efficiency.

\begin{figure}[!t]
  \centering
  \vspace{-0.6cm}
  \includegraphics[width=0.8\linewidth]{./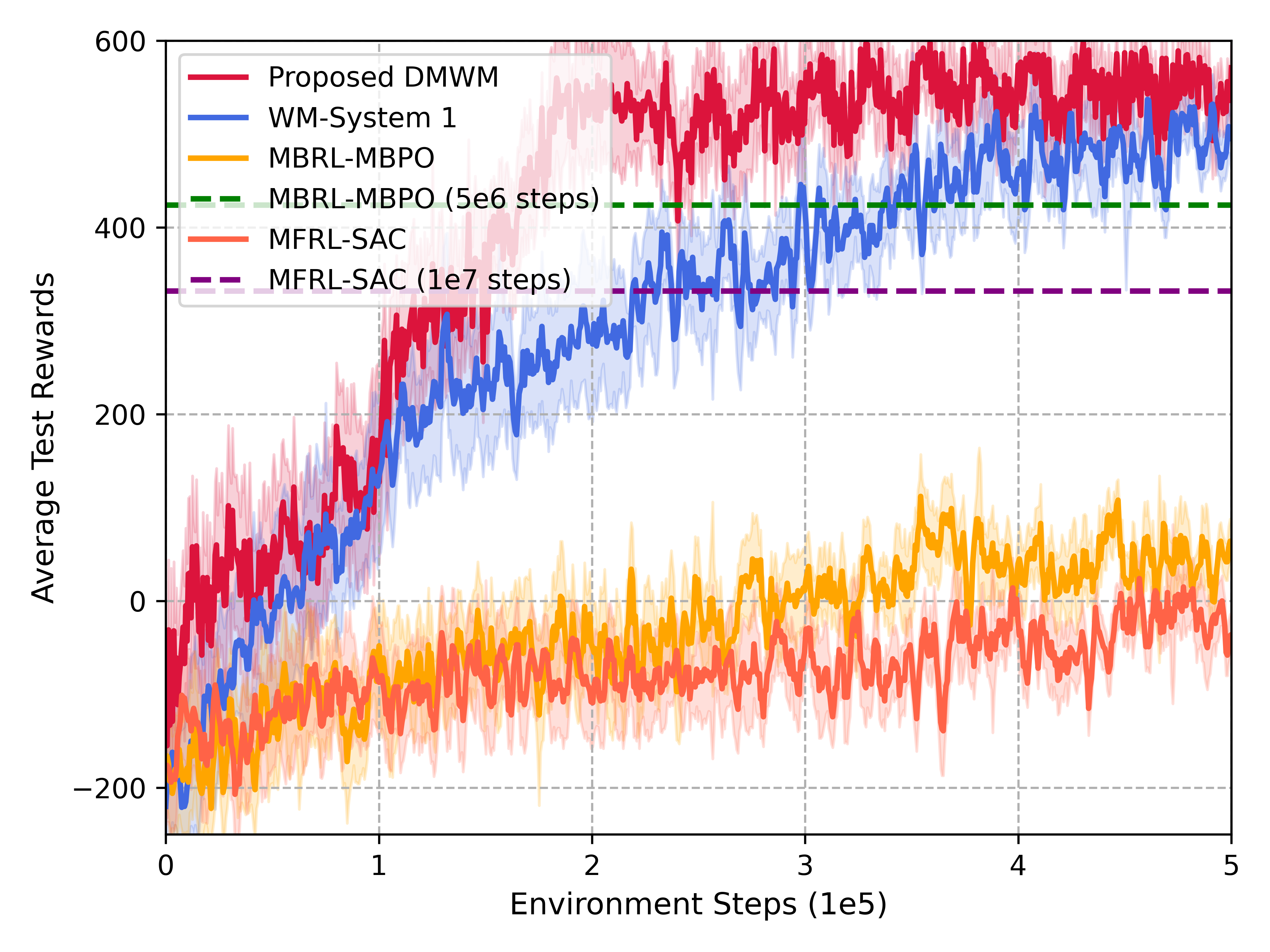}
  \vspace{-0.3cm}
  \caption{The average test rewards of different schemes under limited environment steps.}
  \vspace{-0.5cm}
\end{figure}

\begin{figure}[!t]
  \centering
  \includegraphics[width=0.8\linewidth]{./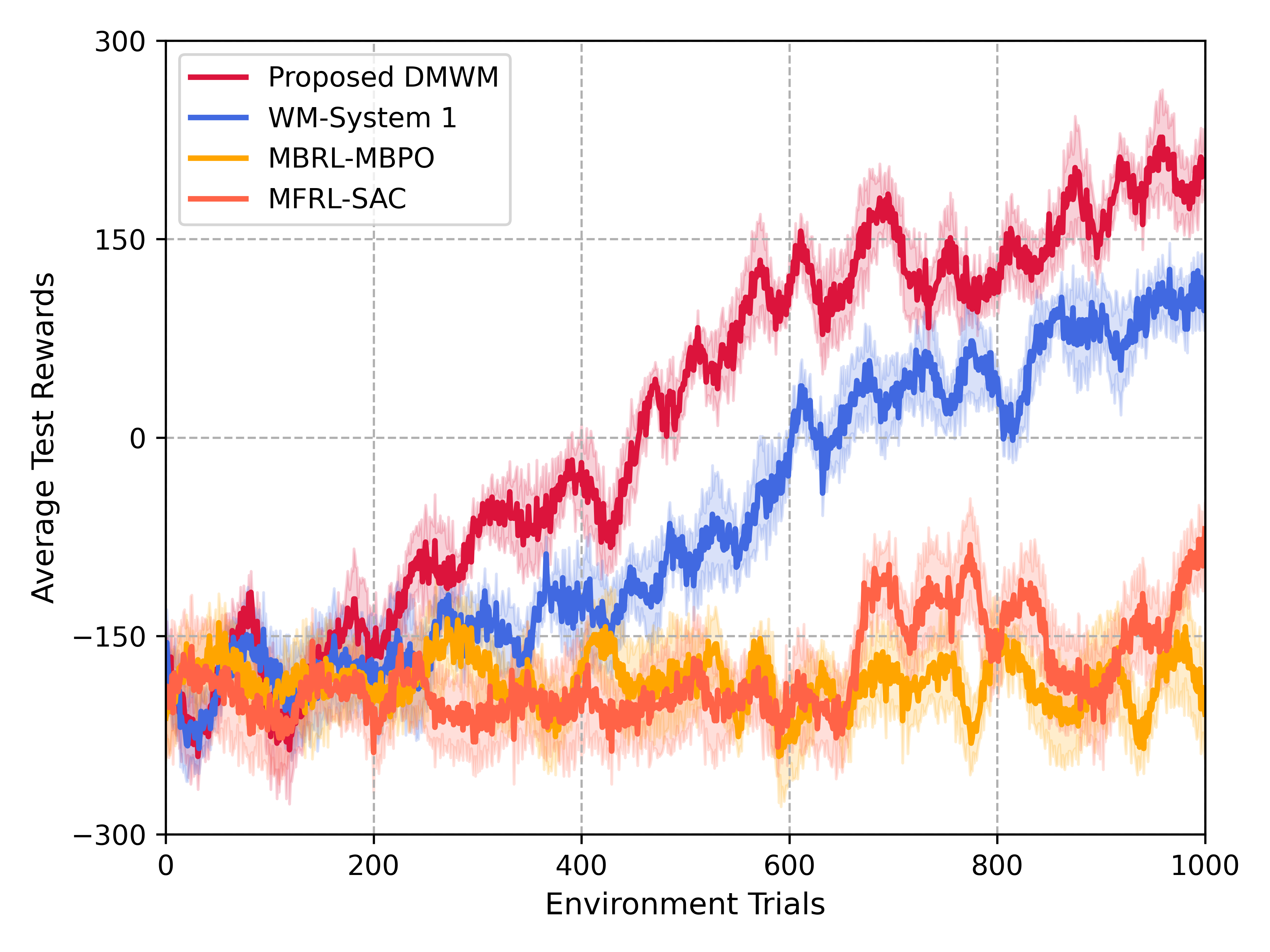}
  \vspace{-0.4cm}
  \caption{The average test rewards of different schemes under limited environment trials.}
  \vspace{-0.2cm}
\end{figure}

\begin{figure}[!t]
  \centering
  \vspace{-0.8cm}
  \includegraphics[width=0.85\linewidth]{./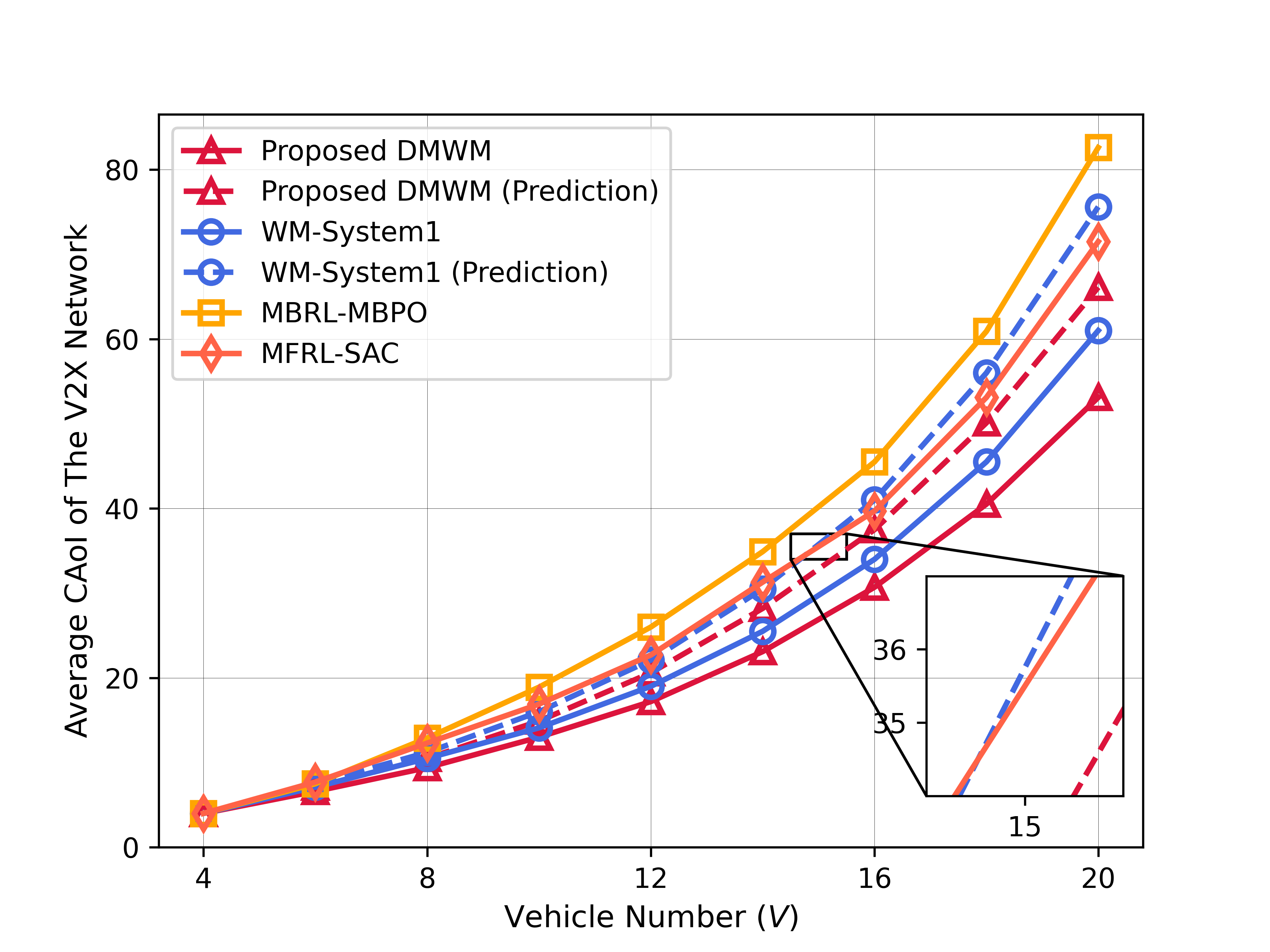}
  \vspace{-0.3cm}
  \caption{The average CAoI of the V2X network versus different number of vehicles.}
\end{figure}

\vspace{-0.3cm}
\subsection{Performance Comparison}
Fig.~7 shows the average CAoI of the V2X network versus different numbers of vehicles. 
DMWM improves the CAoI by up to 16\%, 32\% and 22\%, respectively, compared to MBRL-MBPO, MFRL-SAC, and WM-System~1. This is due to the long-term planning ability of DMWM that jointly considers long-horizon CAoI states of vehicles with logical consistency and the reliability of link scheduling, thus selecting the optimal solution over a long horizon. It is also observed that the DMWM and WM-System~1 with only imagined states, named ``Proposed DMWM (Prediction)" and ``WM-System~1 (Prediction)", respectively, can maintain stable performance close to RL approaches with real-time wireless data unavailable, which is significant for practical deployment and applications with occasional data interruptions.

Fig. 8 studies the long-term prediction performance versus different logical inference depths $\alpha$ of the System~2 component. From Fig. 10, we observe that a moderate logical inference depth can significantly reduce CAoI at larger prediction steps since the System~2 component ensures the multi-step consistency across imagined transitions of the V2X network. However, the gain from increasing logical inference depth will saturate. This is because a relatively small depth is sufficient to model the physics and dynamics of blockage, mobility, and scheduling. In this context, longer implication chains can compound rollout errors and propagate noise in symbolic predicates across steps. Hence, the optimal logical inference depth is scenario-dependent that different wireless environments exhibit distinct complex temporal dependencies and error accumulation. Moreover, we must mention that the additional logical depth is used to regularize imagination and training, while the online execution still relies on System 1 with online runtime latency remaining unchanged.

Fig. 9 shows the average CAoI versus imagination horizon for different network sizes. As the complexity of the network increases, the longer imagination horizon can improve the CAoI since it captures delayed endogenous CAoI feedback and mobility-induced exogenous dynamics. However, excessive horizon lengths can lead to accumulated model error, which in turn undermines the long-term planning ability. Compared to the world model with only System~1, DMWM achieves 26.1\% improvement when horizon size $H=40$ and 44\% improvement when horizon size $H=50$ with the number of vehicles $V=16$. This is because the System~2 component imposes long-horizon logical consistency that suppresses accumulated rollout error from the System~1 component.
Hence, the dual-mind approach is practically applicable to complex wireless networks that require robust planning over extended horizons.



\begin{figure}[!t]
  \centering
  \vspace{-0.55cm}
  \includegraphics[width=0.85\linewidth]{./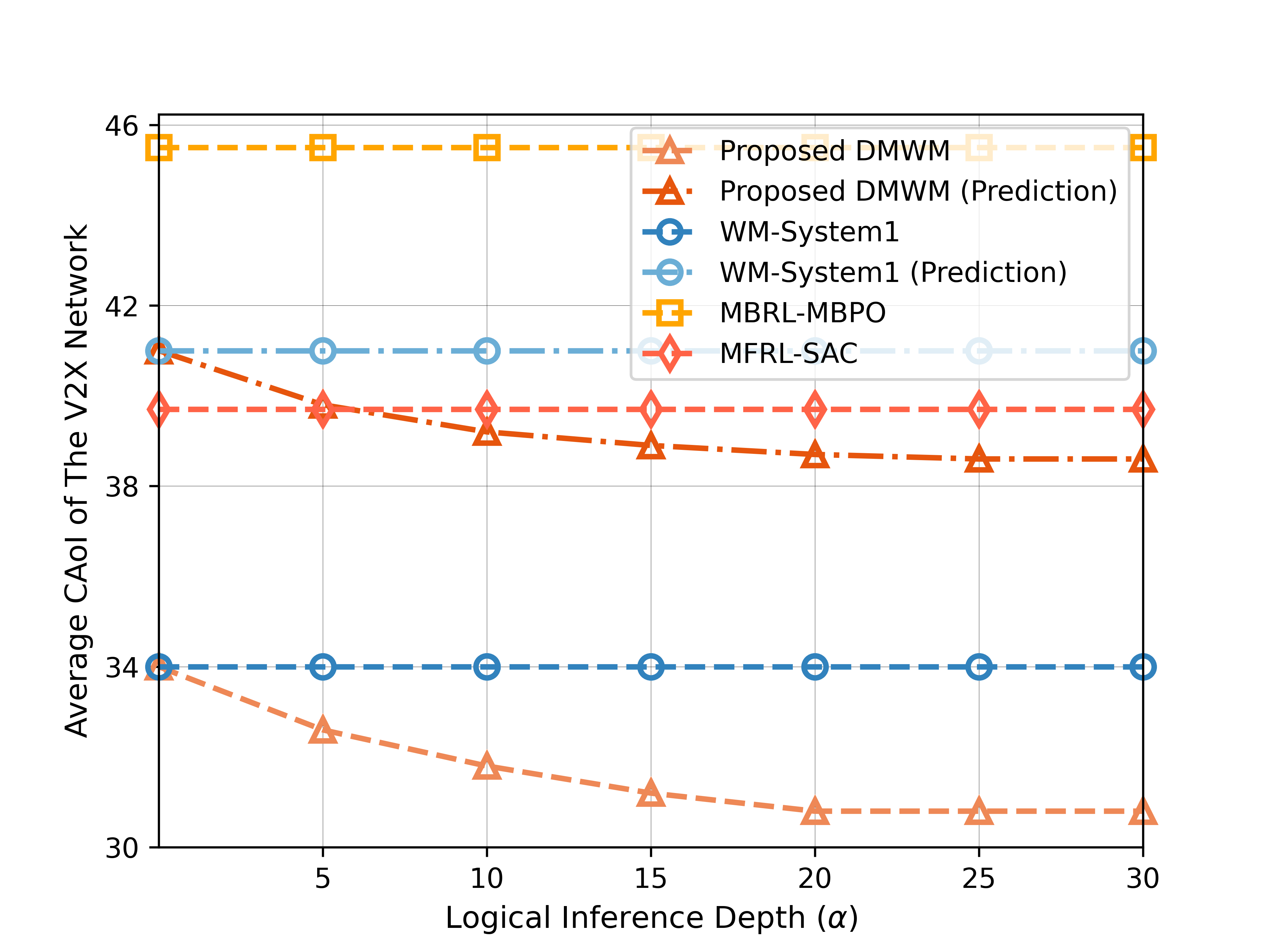}
  \vspace{-0.2cm}
  \caption{The average CAoI of the V2X network versus prediction steps with different logical inference depth.}
\end{figure}

\vspace{-0.3cm}
\subsection{Generalization}
\textbf{Observation and action masking:} The proposed DMWM for wireless learns to perform scheduling under varying numbers of vehicles, i.e.,  dynamic observation and action spaces. To adapt to the varying dimensions, we introduce a masking mechanism for zero-pad inputs and outputs to the largest respective dimensions \cite{hansen2023td} of the observation space and the action space. In practical V2X use cases, the maximum number of vehicles is decided by the limited maximum coverage range of an RSU. In particular, during the training and inference, we will mask out the invalid dimensions in predictions and actions. It ensures that prediction errors in invalid observation dimensions cannot influence the latent representation and policy learning. The link scheduling is sampled only along the valid action dimensions during planning.

\textbf{Generalization settings:} In Fig. 10, we consider new road scenes, unseen numbers of lanes and vehicles, and their joint dynamic combinations to simulate the dynamic environments, physics and network topology in real V2X networks. We evaluate both few-shot learning based on pretrained models and learning from scratch with limited samples. In Fig. 10(a), the models are trained with three scenarios and are generalized to three unseen scenarios with $\Psi = 12$ and $\Upsilon=3$ for scene generalization. In Fig. 10(b), the models are trained on $\Upsilon \in \{1,3,5\}$ and are generalized to unseen numbers of lanes $\tilde{\Upsilon}\in \{2,4,6\}$ with $\Psi = 12$ and a fixed wireless scenario for physical generalization. In Fig. 10(c), the models are trained with $\Psi \in \{10,12,14\}$ and are generalized to unseen numbers of vehicles $\tilde{\Psi}\in \{11,13,15\}$ with $\Upsilon=3$ and a fixed wireless scenario for network topology generalization. In Fig. 10(d), the models are trained on three scenarios with $\Psi \in \{10,12,14\}$, $\Upsilon \in \{1,3,5\}$ and are generalized to dynamic combinations of three unseen scenarios, $\tilde{\Upsilon}\in [1,6]$ and $\tilde{\Psi}\in [10,15]$.

\begin{figure}[!t]
  \centering
  \vspace{-0.2cm}
  \includegraphics[width=0.85\linewidth]{./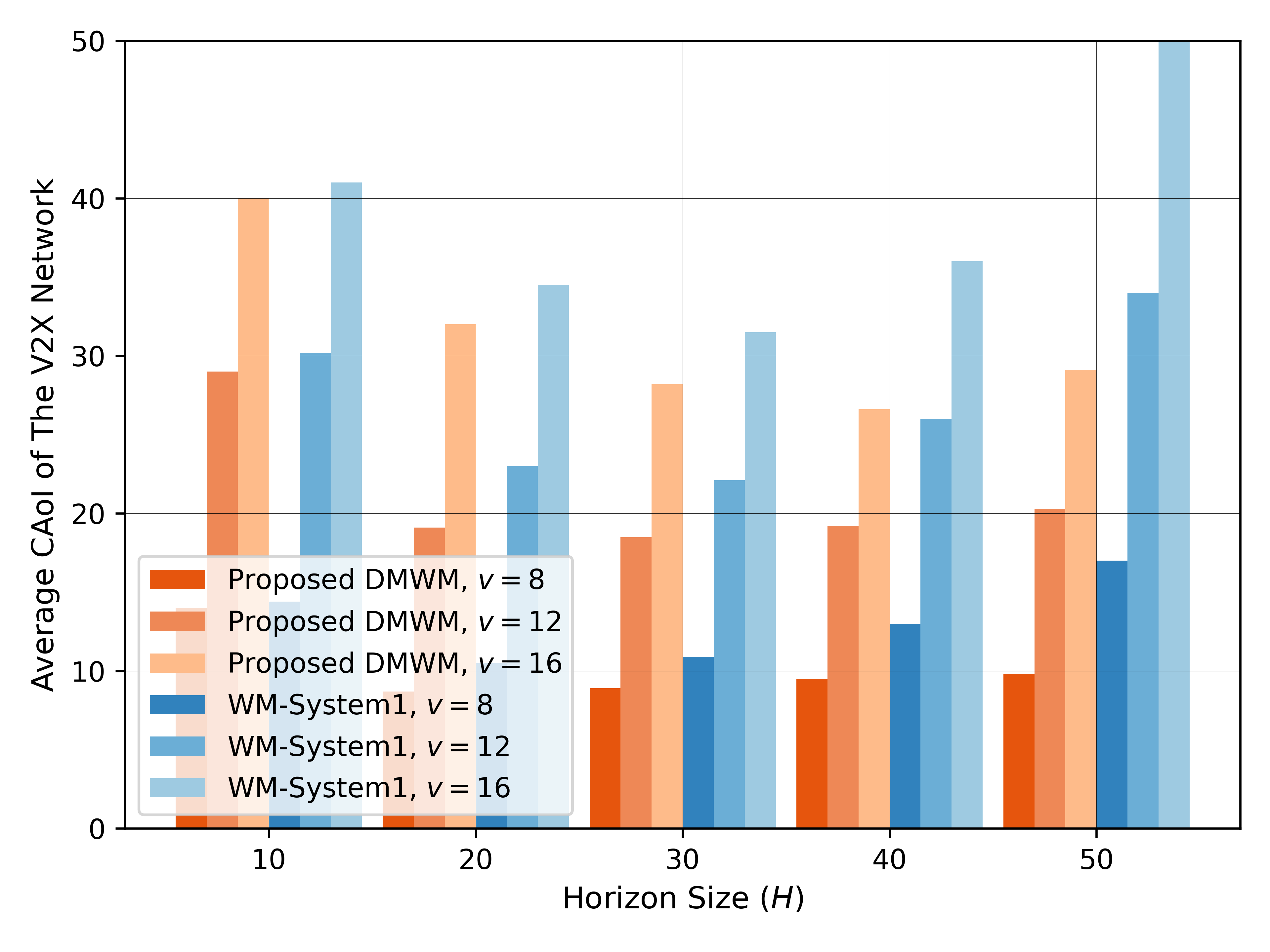}
  \vspace{-0.4cm}
  \caption{The average CAoI of the V2X network versus imagination horizon with different network complexity.}
\end{figure}

\begin{figure*}[!t]
  \centering
  \includegraphics[width=0.66\linewidth]{./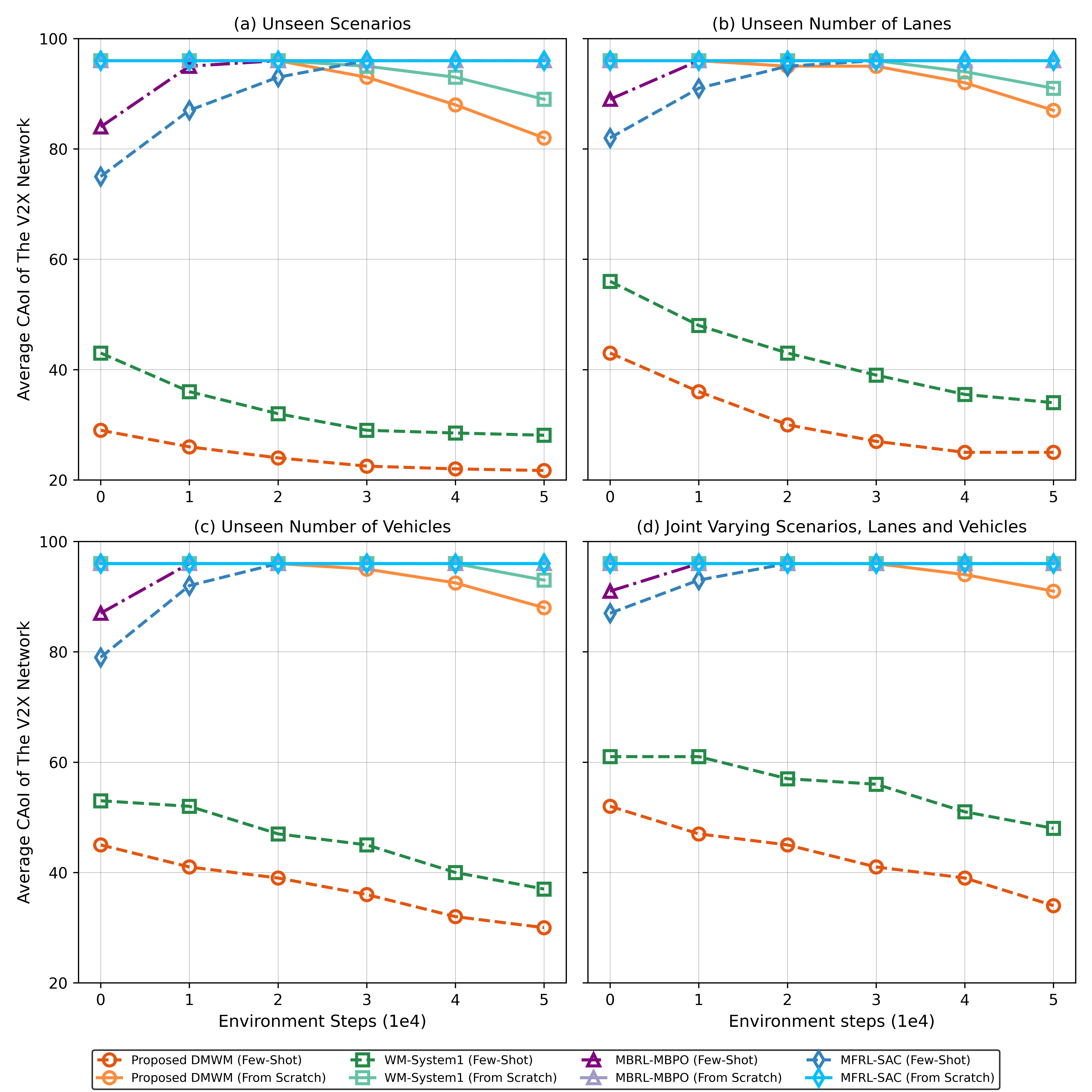}
\vspace{-0.1cm}
\caption{Generalization studies of different approaches adapting to (a) three unseen scenarios for scene generalization with pretraining on three given scenarios, (b) unseen numbers of lanes $\tilde{\Upsilon}\in \{2,4,6\}$ for physical generalization with pretraining on $\Upsilon \in \{1,3,5\}$, (c) unseen numbers of vehicles $\tilde{\Psi}\in \{11,13,15\}$ for network topology generalization with pretraining on $\Psi \in \{10,12,14\}$, and (d) jointly varying scenarios, number of lanes, and the number of vehicles.}
    \vspace{-0.6cm}
\end{figure*}

Fig. 10 shows the generalization and adaptation capability of different approaches to unseen environments beyond training data. Across all four generalization settings, DMWM achieves the best CAoI performance and adaptation with the fewest wireless data during few-shot learning.
In particular, compared to the world model with only System 1, DMWM improves CAoI in unseen scenarios, unseen number of lanes, unseen number of vehicles, and joint dynamics up to 22.7\%, 26.4\%, 26.8\%, and 30.8\%, respectively.
These results demonstrate that the logical thinking ability of System~2 is critical for carrying knowledge and physics across different environments and network conditions rather than repetitive, statistical pattern learning over wireless data. The System~2 component of the proposed dual-mind approach encodes global structural, logical relations of the wireless networks, e.g., the temporal and spatial dependence of link scheduling, that remain valid across different environments and network topologies. This enables the reuse of symbolic abstractions and far fewer online learning steps. In a nutshell, with pretraining and few-shot adaptation, the proposed DMWM achieves stronger learning efficiency and generalization, while run-time execution remains the quick inference of System~1 with low latency. Hence, the proposed approach is practically applicable to wireless networks that require planning over extended horizons and adapt quickly to highly dynamic, complex environments.

\vspace{-0.4cm}
\section{Conclusion}
\vspace{-0.2cm}
In this paper, we have proposed a novel, unified world model-based learning approach for wireless networks, which overcomes the limitations of traditional RL approaches in data efficiency, long-term planning and generalization ability. Inspired by cognitive psychology, DMWM is composed of an intuitive, pattern-driven System~1 component and a logic-driven System~2 component. Taking the highly dynamic mmWave V2X network as an example, the proposed DMWM captures the dynamics and logical rules of the wireless network. Then, long-term link scheduling is learned in imagined trajectories with logical consistency over extended horizons rather than relying on expensive, real-time interactions with actual environments. Moreover, we have used the world model's imagination capability to jointly predict and schedule links when real-time wireless data is unavailable. Extensive simulation results on the realistic simulator show the significant improvements of DMWM in data efficiency and CAoI performance compared to the state-of-the-art RL baselines and the world model with only System~1. Moreover, the simulation results show the superior generalization and adaptivity of DMWM in unseen scenarios and conditions. Hence, the proposed DMWM-based learning approach has provided a promising new paradigm towards AGI-enabled wireless networks with complex dynamics and long-term optimization requirements.

\section*{Appendix A}
\vspace{-0.15cm}
\section*{Proof of Theorem 1}
\vspace{-0.15cm}
The conditional log-likelihood of the observed wireless features $o_{1:T}$ given the action sequence $a_{1:T}$ and the logic-enhanced generative model $\tilde p_\varphi$ is represented by
\begin{equation}
\ln \tilde p_\varphi\bigl(o_{1:T} \! \mid \! a_{1:T}\bigr) \! = \! \ln \int \tilde p_\varphi\bigl(o_{1:T},z_{1:T}\! \mid \! a_{1:T}\bigr)\,\mathrm{d}z_{1:T}.
\end{equation}
Then, we introduce the variational posterior
$
  q_\varphi\bigl(z_{1:T}\mid o_{1:T},a_{1:T}\bigr) = \prod_{t=1}^T q_\varphi\bigl(z_t\mid h_t,o_t\bigr), 
$
and we can rewrite the integral as an expectation in (19) as
\begin{equation}
\ln \tilde p_\varphi(o_{1:T}\mid a_{1:T}) = \ln \mathbb{E}_{q_\varphi(z_{1:T}\mid o_{1:T},a_{1:T})}
\Bigl[\tfrac{\tilde p_\varphi(o_{1:T},z_{1:T}\mid a_{1:T})}
{q_\varphi(z_{1:T}\mid o_{1:T},a_{1:T})}\Bigr].
\end{equation}
By applying Jensen's inequality, we obtain the ELBO as
\begin{equation}
\ln \tilde p_\varphi(o_{1:T}\mid a_{1:T})
\ge
\mathbb{E}_{q_\varphi}\Bigl[
\ln \tfrac{\tilde p_\varphi(o_{1:T},z_{1:T}\mid a_{1:T})}
{q_\varphi(z_{1:T}\mid o_{1:T},a_{1:T})}
\Bigr].
\end{equation}
We substitute the logic-enhanced model's factorization as
\begin{equation}
\tilde p_\varphi(o_{1:T},z_{1:T}\mid a_{1:T})=\prod_{t=1}^T p_\varphi\bigl(o_t\mid z_t\bigr)
p_\varphi\bigl(z_t\mid z_{t-1},a_{t-1}\bigr)
\phi^\alpha_t,
\end{equation}
and substitute the variational decomposition as
$q_\varphi(z_{1:T}\mid o_{1:T},a_{1:T}) =\prod_{t=1}^T q_\varphi\bigl(z_t\mid o_{\le t},a_{\le t}\bigr)$, then we obtain (23),
where we abbreviate $q_1=q_\varphi\bigl(z_t\mid o_{\le t},a_{<t}\bigr)$ and $q_2=q_\varphi\bigl(z_{t-1}\mid o_{\le t-1},a_{<t-1}\bigr)$.
\begin{figure*}
\begin{equation}
  {\setlength{\jot}{1pt}
\begin{aligned}
&\ln \tilde{p}_\varphi\left(o_{1: T} \mid a_{1: T}\right) \ge \mathbb{E}_{q_\varphi(z_{1:T} \mid o_{1:T}, a_{1:T})} \left[
\sum_{t=1}^T \ln p_\varphi(o_t  \mid  z_t) + \ln p_\varphi(z_t  \mid  z_{t-1}, a_{t-1}) + \ln \phi^{\alpha}_t- \ln q_\varphi(z_t  \mid  o_{\leq t}, a_{\leq t})
\right] \\
& = \sum_{t=1}^T(\mathbb{E}_{q_1}\left[\ln p_\varphi\left(o_t \mid z_t\right)\right]+ \mathbb{E}_{q_1}\left[\ln \phi^{\alpha}_{t} \right] -\mathbb{E}_{q_2}\left[ \mathrm{D}_{\textrm{KL}}\left[q_\varphi\left(z_t \mid o_{\leq t}, a_{<t}\right) \| p_\varphi\left(z_t \mid z_{t-1}, a_{t-1}\right)\right]\right])\\
\end{aligned}
  }
\end{equation}
\vspace{-0.1cm}
\hrulefill
\vspace{-0.6cm}
\end{figure*}

\bibliography{IEEEabrv,reference}

\end{document}